\newtheorem{Theorem}{Theorem}
\newtheorem{lemm}{Lemma}
\newtheorem{Assumption}{Assumption}
\newtheorem{Corollary}[Theorem]{Corollary}
\theoremstyle{remark}
\newtheorem{Remark}{Remark}
\newtheoremstyle{mystyle}%
{3pt}
{3pt}
{\itshape\color{red}}
{}
{\bfseries\color{blue}}
{.}
{.5em}
{}
\theoremstyle{mystyle}
\newcommand{\convP}{\stackrel{P}{\to}}
\newcommand{\bs}{\boldsymbol{s}}
\begin{document}

\title{Prediction in functional regression with discretely observed and noisy covariates\bigskip}
\author{Siegfried H\"ormann\thanks{Corresponding author.
Email: shoermann@tugraz.at}\\
    Institute of Statistics, Graz University of Technology\\
    and \\
    Fatima Jammoul\hspace{.2cm} \\
    Institute of Statistics, Graz University of Technology}

\providecommand{\keywords}[1]{\textbf{\textit{Keywords:}} #1}

\maketitle
\date{}

\begin{abstract}
In practice functional data are sampled on a discrete set of observation points and often susceptible to noise. We consider in this paper the setting where such data are used as explanatory variables in a regression problem. If the primary goal is prediction, we show that the gain by embedding the problem into a scalar-on-function regression is limited. Instead we impose a factor model on the predictors and suggest regressing the response on an appropriate number of factor scores. This approach is shown to be consistent under mild technical assumptions, numerically efficient and gives good practical performance in both simulations as well as real data settings.
\end{abstract}
\noindent%
{\it Keywords:}  functional data, factor models, PCA, functional regression, scalar-on-function regression, signal-plus-noise
\vfill

\newpage

\section{Introduction}

We consider a sample of functional data $(X_t\colon 1\leq t\leq T)$, where each data point $X_t$ corresponds to a curve $(X_t(s)\colon s\in [0,1])$. These curves can be dependent and stationary or iid. As in practical applications full curves are rarely observed, we furthermore assume that these curves are discretely sampled at the same intraday time points $\bs=(s_1,\ldots, s_p)$ with $0\leq s_1<\cdots<s_p\leq 1$. Often these measurements are susceptible to some sort of noise and thus we actually observe
\begin{equation}\label{signoise}
Z_t=(X_t(s_1),\ldots, X_t(s_p))^\prime+(U_{t1},\ldots, U_{tp})^\prime=:\mathcal{X}_t+U_t.
\end{equation}
We will assume that the noise vectors $(U_t)$ and signals $(\mathcal{X}_t)$ are independent. We are interested in the situation where $p$ is large and may grow with the sample size. More detailed assumptions on the setting will be given later.

The overall goal of this paper is the to investigate a \emph{scalar-on-function} linear model with functional covariates $X_t$ and a scalar response $Y_t$. Hence we consider the relation
\begin{equation}\label{fulregmod}
Y_t=\alpha+\int_0^1 \beta(s)X_t(s)ds+\varepsilon_t,
\end{equation}
for some square integrable function $\beta$ and iid errors $(\varepsilon_t)$. 
Many contributions have then focused on how to estimate the slope curve $\beta(s)$ and establish consistency, rates of convergence and the like. Arguably the most common estimation approach is based on functional principal components (FPCs). The basic idea is to regress $Y_t$ on the principal component scores of $X_t$. The resulting coefficients are estimators for the scores of $\beta(s)$, when this function is expanded along the FPCs. For details we refer e.g.\ to \citet{cardotetal:1999} or

 \citet{hallhorowitz:2007}, where also rates of convergence have been established.

A key difficulty in functional regression is that we are facing an ill-posed problem which requires some regularization techniques. Typical approaches are spectral truncation (e.g.\  \citet{hormann:kidzinski:2015}) or Tikhonov regularisation (e.g.\ \citet{ferratyetal:2012}). \citet{panaretos:chakraborty:2019} propose a hybrid version of both approaches.  \citet{yuanetal:2010} provide a very general method using a reproducing kernel Hilbert space (RKHS). Such regularization techniques can be viewed as different forms of smoothing and thus it is no surprise that spline-based estimation is another important approach in functional regression. For example, \citet{cardotetal:2003} used a penalized B-Spline approach reminiscent of ridge regression in order to fit the regression curve. 

Most of the papers in FDA literature assume that the explanatory variables $X_t$ are fully observed curves. In practice, however, we hardly ever measure a process over a continuum, but rather face a setting as in \eqref{signoise}. 
\citet{ramsaysilverman:2005} explore this question from a pragmatic point of view. By smoothing the predictors $X$ as well as the regression function $\beta(s)$ via some set of appropriately chosen basis functions, the problem is cast in a fully functional setup. \citet{lihsing:2007} 
allow the $X_t$'s to be partially observed as well as potentially noisy. Their estimate for $\beta$ is again obtained by first smoothing $X$ and then using the result in what is essentially a penalized least squares approach. 
 \cite{kneipetal:2016} also consider a discrete sampling in a more general regression problem with `points of impact':
\begin{equation}\label{eq:poi}
Y_t=\alpha+\int_0^1 \beta(s)X_t(s)ds+\sum_{r=1}^S\beta_r X_t(\tilde s_r) + \varepsilon_t.
\end{equation}
For improved estimation of this model we refer to \citet{liebletal:2020}. 

In this paper we consider the setting \eqref{signoise}. Instead of exploring an estimator for $\beta(s)$ itself, however, we solely focus on the prediction problem.
Hence our main target is to find a good predictor $$\widehat{Y}_{T+1}=f_T(Z_1,\ldots, Z_T,Z_{T+1},Y_1,\ldots, Y_T).$$ We thus aim to make $\widetilde{Y}_{T+1}-\widehat{Y}_{T+1}$ small, where $\widetilde{Y}_{T+1}=E(Y_{T+1}|X_1,\ldots, X_T)=Y_{T+1}-\varepsilon_{T+1}$.  \citet{caihall:2006} stress the importance of distinguishing between estimating the regression function and prediction. Is the latter of interest, then regularity of $\hat\beta(s)$ may not be the target. They work with fully observed data and derive rates for a fixed non-random regressor $x$. The results can be extended to noisy data if $n/p=O(1)$. \citet{cardotetal:2007} and \citet{crambesetal:2009} consider this problem indirectly. In their setting the predictor functions $X_t$ are observed (potentially with iid noise) on an equidistantly spaced grid. In essence they then study the distance between $\beta$ and $\hat\beta$ in the semi-norm induced by the covariance operator of the $X_t$. This error in turn is closely related to the error when predicting the conditional mean of $Y_{T+1}$ for any new random function $X_{T+1}$ independent of the sample. The papers \citet{caihall:2006} and \citet{crambesetal:2009} are thus probably the closest related to our setup and target and will hence be methods of comparison. 

In our theoretical results we will lay particular focus on avoiding restrictive smoothness conditions on the explanatory variables $X_t(s)$ and on the slope function $\beta(s)$.
 In Section~\ref{s:structure} we introduce our method, based on an underlying factor model structure of the noisy covariates. We present our theoretical findings  in Section~\ref{s:theobounds}. In Section~\ref{s:sim} we consider a comprehensive simulation study, showcasing our method in cases of both smooth and highly irregular slope functions. A real data example is given in Section~\ref{s:realdata}. We conclude in Section~\ref{s:conclusion}. Proofs and technical lemmas are given in the Appendix.

\section{Exploiting the factor model structure}\label{s:structure}

When looking at our regression problem one may wonder if the detour to the functional model \eqref{fulregmod} is necessary. Since we observe, in fact, a multivariate predictor and not a functional one, it does not seem unreasonable to directly impose a linear model of the form
\begin{equation}\label{discregmod}
Y_t=\alpha+\sum_{j=1}^p \beta_j X_t(s_j)+\varepsilon_t,
\end{equation}
and then to explore the problem from a purely multivariate perspective.  If the number of observation points $p$ is fixed, and the sampling design is the same for all observations (which is typically the case for machine recorded data), then this approach in principle is doable. In the case of noisy covariates, one must first find a way to eliminate the noise $U_t$ or at least explore how it will impact the inference. However, here we are interested in the setting where $p$ is diverging with the sample size $T$. In this case the linear regression machinery becomes more delicate, even if the $X_t$'s were observed without noise. In case $p$ diverges faster than $T$ the problem becomes ill-posed and again requires some regularization approach. Another theoretical and also aesthetic issue is that our model will change with increasing $p$. Hence the coefficients $\beta_j$ are actually of the form $\beta_j^{(p)}$. In an asymptotic analysis we may need to specify what the limiting model is, which then naturally brings us back to the functional view in \eqref{fulregmod}. 
The arguably most important argument against exploiting a model of the form of \eqref{discregmod} is the subsequent collinearity issue when estimating the coefficients $\beta_j$.
Let $X^{(j)}:=(X_1(s_j),\ldots, X_T(s_j))^\prime$. If the sampling points $\bs$ are dense and if the curves $X_t(s)$ are smooth, then neighboring columns $X^{(j)}$ will resemble each other closely. On the other hand, the smoothness of the curves, which is commonly imposed in the literature, assures that we can very well approximate the linear span of $X=(X_1(\bs),\ldots X_T(\bs))^\prime$ by a comparably low dimensional subspace. Most common estimation techniques make use of this fact in one way or the other and thus account for the  \emph{the functional nature of the data}.

In this paper we would like to directly exploit the possibility that $X$ can be sufficiently well approximated by a lower dimensional space. Our approach, however,  is not necessarily tied to smoothness. We will explore the \emph{factor space of an approximate factor model} that can be attributed to functional data of the form \eqref{signoise}. We begin by explaining how such a factor model and the respective factor space is obtained. 

We define the mean function $\mu(s)=EX_t(s)$ and the covariance kernel $\Gamma^X(s,s^\prime)=\mathrm{Cov}(X_t(s),X_t(s^\prime))$, respectively. Assuming that  $\Gamma^X$ is continuous, we obtain
by the Karhunen-Lo\`eve expansion that
\begin{equation}\label{e:mercer}
X_t(s)=\mu(s)+\sum_{\ell\geq 1} x_{t\ell}\varphi_\ell(s),
\end{equation}
where  $\varphi_\ell(s)$ are the eigenfunctions of the covariance operator $\Gamma^X$ and $x_{t\ell}=\int_0^1(X_t(s)-\mu(s))\varphi_\ell(s) ds=:\langle X_t-\mu,\varphi_\ell\rangle$. By Mercer's theorem it follows that the eigenfunctions $\varphi_\ell(s)$ are continuous and that convergence in \eqref{e:mercer} is uniform, in the sense
\begin{equation}\label{e:KLuni}
\sup_{s\in [0,1]}E\left|X_t(s)-\mu(s)-\sum_{\ell= 1}^L x_{t\ell}\varphi_\ell(s)\right|^2\to 0,\quad L\to\infty.
\end{equation}
See e.g.\ \citet{bosq:2000} for details. The scores $(x_{t\ell}\colon \ell\geq 1)$ are uncorrelated and  $\mathrm{Var}(x_{t\ell})=\lambda_\ell$, where $\lambda_\ell$ are the eigenvalues of $\Gamma^X$ (in decreasing order). Choose some integer $L\geq 1$ and define the matrix
$$
B(\bs):=(\sqrt{\lambda_1}\varphi_1(\bs),\ldots,\sqrt{\lambda_L}\varphi_L(\bs)).
$$
Moreover, define $f_t=f_{t,L}=(x_{t1}/\sqrt{\lambda_1},\ldots, x_{tL}/\sqrt{\lambda_L})^\prime$. Then we may write 
\begin{equation}\label{e:low}
X_t(\bs)=\mu(\bs)+B(\bs) f_t+ R_t(\bs),
\end{equation}
with $R_t(\bs)=R_{t,L}(\bs)=X_t(\bs)-\mu(\bs)-B(\bs)f_t$ and $\max_jE|R_{t,L}(s_j)|\convP 0$ as $L\to\infty$.  
Hence, $\mu(\bs)+B(\bs)f_t$ provides an explicit form of an $L$-dimensional proxy of $X_t(\bs)$.  

We are going to incorporate these simple observations in the following manner into our theory.
\begin{Assumption}\label{ass:L}
For $L=L(T)$ large enough, 
we assume that $R_{t,L}(\bs)=0$. The dimension parameter $L$ is allowed to diverge with $T\to\infty$. 
\end{Assumption}
From a practical point of view, Assumption~\ref{ass:L} is not a restriction, since in many real examples $R_{t,L}(s)$ converges to zero rapidly, and hence the error is practically negligible if $L$ is chosen large enough. In its spirit it is related to Assumption~(A3) in \cite{crambesetal:2009}, who impose existence of an $L$ dimensional subspace of functions on which $X_t(s)$ can be uniformly sufficiently well approximated on $[0,1]$. Here we require to have this approximation only on $s\in\{s_1,\ldots, s_p\}$, at the price of requiring that the error becomes $0$ when $L$ is large enough. Our assumption is a theoretical trade-off, which in turn allows to substitute smoothness assumptions and many additional complex technical constraints which are needed in related papers for deriving theoretical results.

Under \eqref{signoise} and Assumption~\ref{ass:L} we then have
\begin{equation}\label{e:factmod}
Z_t=\mu(\bs)+B(\bs) f_t+U_t.
\end{equation}
It holds that $\mathrm{Var}(f_t)=I_L$ and by assumption $\mathrm{Cov}(f_t,U_t)=0$, where $I_L$ denotes the identity matrix in $\mathbb{R}^{L\times L}$. Imposing  that $\mathrm{Var}(U_t)$ is a diagonal matrix, we see that \emph{$Z_t$ follows an $L$-factor model.}

\section{Regressing on the factor scores}\label{s:regression}

Let us now assume that we have Model \eqref{signoise} and that Assumption~\ref{ass:L} holds.  
 Irrespective of whether we impose the functional linear model \eqref{fulregmod} or the multivariate linear model \eqref{discregmod}, using the derived representation for $X_t$ we obtain
\begin{equation}\label{e:regfact}
Y_t=a+b^\prime f_t+\varepsilon_t,
\end{equation}
where $b=(b_1,\ldots, b_L)^\prime$ has components 
$$b_\ell=\int_0^1 \beta(s)\sqrt{\lambda_\ell}\varphi_\ell(s)ds\quad\text{or}\quad b_\ell=\sum_{j=1}^p \beta_j\sqrt{\lambda_\ell}\varphi_\ell(s_j),$$
and
$$
a=\int_0^1\mu(s)\beta(s)ds\quad\text{or}\quad a=\sum_{j=1}^p\beta_j\mu(s_j),
$$
depending on whether we work under \eqref{fulregmod}
or under \eqref{discregmod}, respectively. 
Hence in both cases we obtain an ordinary linear model with explanatory variables $f_t$, which are however not observable and need to be estimated. For fully observed data, we may use $\hat f_t=\int_0^1 X_t(s)\hat\varphi_\ell(s)ds/\sqrt{\hat\lambda_\ell}$, where $\hat\lambda_\ell$ and $\hat\varphi_\ell(s)$ denote the empirical eigenvalues and eigenfunctions related to the sample $X_1,\ldots, X_T$. This hence leads to the classical FPC based estimation schemes. In our more realistic setting, however, we don't observe $X_t(s)$ but rather $Z_t$ as in \eqref{signoise}. We will thus estimate $f_t$ as the factor scores in a factor model, i.e.\ we will pursue a purely multivariate scheme instead of a functional one. A specific estimator and its theoretical properties will be discussed in Section~\ref{s:theobounds}. Before we go into technical details we summarize our general estimation scheme.\bigskip

\underline{{\bf Core algorithm:}}
\begin{enumerate}
\item Estimate $\mu(\bs)$ by $\hat\mu(\bs)=\frac{1}{T}(Z_1+\cdots+Z_{T+1})$.
\item Center the data by $\hat\mu(\bs)$.
\item Choose an appropriate order $\hat L$.
\item Compute estimated factor scores $\hat f_t$.
\item Determine $\hat a$ and $\hat b$ via ordinary least squares.
\item Set $\hat Y_{T+1}=\hat a+\hat b'\hat f_{T+1}$.
\end{enumerate}

Some remarks are due.

\begin{Remark}\label{r:1}
The factor scores are not unique and can be rotated by an orthogonal matrix $G\in\mathbb{R}^{L\times L}$ leading to an equivalent model. For the purpose of prediction the orientation of $\hat f_t$ is irrelevant, because the estimator for the slope will be rotated accordingly. But it has to be noted that the estimators $\hat b$ are not directly comparable when the sample size changes. As a consequence of this, the design matrix $\hat F=[\hat f_1,\ldots, \hat f_{T+1}]'$ needs to be recalculated whenever the sample size changes, because estimates for the factor $f_{T+1}$ may not follow the same rotation as initial estimates for $f_1, \ldots f_T$. This implies that when estimating the factor scores and hence generating the design matrix we need to include the new predictor $Z_{T+1}$. 
\end{Remark}

\begin{Remark}\label{r:3}
An important and non-trivial step in this approach is the estimation of the factor scores $\hat f_t$. In the theoretical framework described in the next section, the estimation is based on a PCA factor model approach. However, the algorithm above can be applied to other factor model methods as well, including a maximum likelihood approach (see e.g. \citet{baili2012}) or a mixture of PCA and maximum likelihood (see e.g. \citet{bailiao2016}). 
\end{Remark}

\begin{Remark}\label{r:4}
A delicate tuning choice in our algorithm concerns the parameter $L$, which determines the number of factors. A natural approach is to use a cross-validation procedure. We refer e.g.\ to the work of \citet{owenwang2016} who developed a Bi-Cross-Validation.  \citet{onatski2010} suggests an empirical eigenvalue distribution approach for this problem. In the setting of this paper, we are not specifically interested in the number of factors $L$ that will recover the signal $X_t$ best, but rather the number of factors $L$ that will provide the best predictions. We propose to address this problem by a cross-validation and refer to Section~\ref{s:sim} for more details on this.
\end{Remark}

\section{Theoretical bounds for the prediction error}\label{s:theobounds}

\subsection{Assumptions}
For a simplified presentation we shall assume from now on that all random variables have zero mean.  Fix $T$ for the moment and then define  $Z = (Z_1, \ldots, Z_{T+1})$ and let $\hat{E} = (\hat{e}_1, \ldots, \hat{e}_L)$ be the eigenvectors of $\frac{1}{T+1}Z^\prime Z$ associated with the largest $L$ eigenvalues. We define
$
\hat F=\sqrt{T+1}\hat E,
$
which denotes the proposed estimator for $F = (f_1, \ldots, f_{T+1})^\prime$. We note that $\frac{1}{T+1}\hat F^\prime \hat F = I_L$. We let $Y=(Y_1,\ldots, Y_{T+1})^\prime$ and $Y_{(-)}=(Y_1,\ldots, Y_{T},0)'$ and define the predictor 
$$\widehat Y_{T+1}:=\frac{1}{T+1}\hat f_{T+1}^\prime\hat F^\prime Y_{(-)}.$$
In other words, the estimate $\widehat Y_{T+1}$ results from the linear model in which $Y_{(-)}$ has been regressed on $\hat F$. Our primary theoretical goal in this paper is to bound $\widehat Y_{T+1}-\widetilde Y_{T+1}$, where $\widetilde Y_{T+1} = f_{T+1}^\prime b$ is the optimal but infeasible predictor. In the following we list the assumptions we are going to use in our proofs. Assumptions~\ref{a:noise}--\ref{a:pcs} stem from \citet{hormann:jammoul:2021} and are required to establish the consistent estimation of the factor scores $f_t$.

\begin{Assumption}\label{a:noise}
The noise process $(U_t)$ is i.i.d.\ zero mean and independent of the signals $(X_t)$. The processes $(U_{ti}\colon 1\leq i\leq p)$ are Gaussian with
 absolutely summable auto-covariances:
$$
\sum_{h\in\mathbb{Z}}|\gamma^U(h)|\leq C_U<\infty.
$$
\end{Assumption}

\begin{Assumption}\label{a:signal}
(a) The process $(X_t\colon t\geq 1)$ is $L^4$-$m$-approximable and has zero mean. (b) The curves $X_t=(X_t(s)\colon s\in [0,1])$ define fourth order random processes (i.e.\ $
\sup_{s\in[0,1]} EX_1^4(s)\leq C_X<\infty$)
with a continuous covariance kernel. 
\end{Assumption}

\begin{Assumption}\label{a:pcs}
For the eigenfunctions $\varphi_\ell$ of the covariance operator $\Gamma^X$ it holds that $$\max_{1\leq k,\ell\leq L}\left|\frac{1}{p}\sum_{i=1}^p\varphi_k(s_i)\varphi_\ell(s_i)\right|=O(1)$$ as $T\to\infty$.
\end{Assumption}

\begin{Assumption}\label{a:regrerror}
The variables $(\varepsilon_t)$ are iid, have zero mean and finite variance $\sigma_\varepsilon^2$. They are independent of $(U_t)$ and $(\chi_t)$. 
\end{Assumption}

Gaussian errors could be avoided at the expense of requiring certain moment inequalities for the noise processes $(U_{ti}\colon 1\leq i\leq p)$. Since we don't require independent noise components, the Gaussian setting is convenient, as the dependence is fully described by the autocovariance function. For ease of presentation, we have chosen to remain within this simplified framework. Furthermore, Assumption~\ref{a:signal} shows that we may consider a much broader class of processes in comparison to existing literature. In particular we require no smoothness assumptions on the underlying predictor, aside from a continuous covariance kernel. The notion of $L^4$-m-approximability allows for a very general dependence structure between the functional observations, including functional ARMA or functional GARCH models. Assumption~\ref{a:pcs} is a merely technical assumption. We note that the corresponding sums are proxies for 
$\int_0^1\varphi_k(s)\varphi_\ell(s)ds$, which is either zero (when $k\neq \ell$) or one (when $k=\ell$). 

\subsection{Consistency rates}

Now we are ready to formulate our theoretical results. In the first theorem we assume that the order of the factor model $L$ is fixed. We will then increase the complexity of the problem.

\begin{Theorem}\label{thmLfixknown} Consider the functional regression model \eqref{fulregmod} with sampling scheme \eqref{signoise}.
Let Assumptions~\ref{ass:L}--\ref{a:regrerror} hold, where $L$ is fixed. Then 
\begin{equation}\label{e:eb}
\vert \widehat Y_{T+1}-\widetilde Y_{T+1} \vert = O_P\left( \frac{1}{\sqrt{T}} + \frac{1}{\sqrt{p}} \right)\quad  \text{as $T$ and $p=p(T)\to\infty$.}
\end{equation}
\end{Theorem}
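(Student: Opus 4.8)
The plan is to decompose the prediction error into two pieces: one coming from replacing the true factor scores by their PCA estimates, and one coming from the ordinary least squares regression coefficient using estimated scores. Since the infeasible predictor is $\widetilde Y_{T+1}=f_{T+1}^\prime b$ while the feasible one is $\widehat Y_{T+1}=\frac{1}{T+1}\hat f_{T+1}^\prime \hat F^\prime Y_{(-)}$, I would first insert the true model $Y_t=f_t^\prime b+\varepsilon_t$ into $Y_{(-)}$ and write
\begin{equation*}
\frac{1}{T+1}\hat F^\prime Y_{(-)}=\frac{1}{T+1}\hat F^\prime F_{(-)}\,b+\frac{1}{T+1}\hat F^\prime \bm{\varepsilon}_{(-)},
\end{equation*}
where $F_{(-)}$ and $\bm{\varepsilon}_{(-)}$ denote the stacked scores and errors with the last entry zeroed out. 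The key algebraic fact I would exploit is the rotation structure from Remark~\ref{r:1}: the PCA estimator $\hat F$ estimates $FG$ for some (random) orthogonal $G\in\mathbb{R}^{L\times L}$, and crucially the \emph{same} rotation applies to $\hat f_{T+1}$ because $Z_{T+1}$ is included in the estimation. Thus the rotation cancels in the bilinear form $\hat f_{T+1}^\prime \hat F^\prime F_{(-)}$, leaving essentially $f_{T+1}^\prime F_{(-)}^\prime F_{(-)}$ plus estimation error.

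\textbf{Controlling the factor-estimation error.} The main workhorse is the consistency of the PCA factor estimates guaranteed by Assumptions~\ref{a:noise}--\ref{a:pcs}, imported from \citet{hormann:jammoul:2021}. I would invoke a bound of the form $\frac{1}{T+1}\sum_t\|\hat f_t-Gf_t\|^2=O_P(1/T+1/p)$ together with a pointwise bound $\|\hat f_{T+1}-Gf_{T+1}\|=O_P(1/\sqrt{T}+1/\sqrt{p})$ for the new observation. The $1/\sqrt p$ term reflects that each score is estimated from $p$ noisy coordinates, and the $1/\sqrt T$ term reflects the finite-sample fluctuation of the sample covariance $\frac{1}{T+1}Z^\prime Z$ around its population counterpart. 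Using $L$ fixed keeps all $L\times L$ matrix manipulations uniformly bounded, so no dimension-dependent factors enter.

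\textbf{Assembling the pieces.} With these ingredients I would show that $\frac{1}{T+1}\hat F^\prime F_{(-)}$ converges (after accounting for $G$) to $I_L$ at rate $O_P(1/\sqrt T+1/\sqrt p)$, using $\frac{1}{T+1}\hat F^\prime \hat F=I_L$ and $\frac{1}{T}\sum f_tf_t^\prime\convP I_L$ (since $\mathrm{Var}(f_t)=I_L$ and the scores are $L^4$-$m$-approximable, so a law of large numbers applies). The noise term $\frac{1}{T+1}\hat F^\prime\bm{\varepsilon}_{(-)}$ is mean-zero and, because $(\varepsilon_t)$ is independent of $(U_t)$ and $(X_t)$ by Assumption~\ref{a:regrerror}, is $O_P(1/\sqrt T)$ by a martingale/CLT-type argument on $\frac{1}{T+1}\sum_t \hat f_t\varepsilon_t$. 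Collecting the dominant contributions — the score error $\|\hat f_{T+1}-Gf_{T+1}\|$ times $\|b\|$, the deviation of $\frac{1}{T+1}\hat F^\prime F_{(-)}$ from identity, and the noise term — all of which are $O_P(1/\sqrt T+1/\sqrt p)$, yields \eqref{e:eb}.

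\textbf{The main obstacle} I anticipate is handling the rotation matrix $G$ rigorously: $G$ is random and depends on the full sample including $Z_{T+1}$, so the cancellation $\hat f_{T+1}^\prime\hat F^\prime \approx (Gf_{T+1})^\prime(GF_{(-)})^\prime=f_{T+1}^\prime F_{(-)}^\prime$ must be justified with matching rotations on both factors rather than treated heuristically. One must verify that the identifying convention for $\hat E$ (eigenvectors of $\frac{1}{T+1}Z^\prime Z$) produces a single coherent $G$ for all $T+1$ observations and that the $O_P$ rates for the score consistency hold \emph{uniformly} enough to bound the last coordinate $\hat f_{T+1}$, which is not an average and therefore requires a genuinely pointwise perturbation bound on the eigenvectors rather than an averaged one.
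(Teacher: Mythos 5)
Your proposal follows essentially the same route as the paper: there too the error is decomposed into a factor-score estimation term, a Gram-matrix deviation term, a regression-noise term and a rotation term, each bounded via the consistency lemmas imported from \citet{hormann:jammoul:2021} (the pointwise bound $\|\hat f_{T+1}-Hf_{T+1}\|=O_P(1/\sqrt T+1/\sqrt p)$, the averaged Frobenius bound for $\hat F-FH$, and $\|(T+1)^{-1}F'F-I_L\|=O_P(1/\sqrt T)$), with all $L$-dependent and $\lambda_L$-dependent factors absorbed into constants since $L$ is fixed. The one place the paper is more careful than your sketch is exactly the obstacle you flag yourself: instead of positing an exactly orthogonal $G$, it works with the explicit, only asymptotically orthogonal matrix $H=\frac{1}{T+1}\hat\Lambda^{-1}\hat F'FB'B$ and bounds $\|H'H-I_L\|=O_P(1/\sqrt T+1/\sqrt p)$ for fixed $L$, so the residual term $f_{T+1}'(H'H-I_L)b$ that your orthogonality assumption would suppress is shown to contribute at the same rate and the conclusion stands.
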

In this setting we hence obtain a parametric rate of convergence, provided that $T/p=O(1)$. Such a rate was also obtained in \cite{caihall:2006} under a quite different setup, invoking a non-random regressor function $x$ with bounded scores $|\langle x,\varphi_k\rangle|\leq C k^{-\gamma}$ and technical assumptions on $\gamma$. As it is pointed out in \cite{crambesetal:2009},  inference on a fixed $x$ cannot, however, be directly compared to rates of convergence of the prediction error for random regressor functions. These authors in turn obtained a non-parametric rate for the mean square prediction error of the order $O(T^{-(2m+2q+1)/(2m+2q+2)})$ where $m$ provides the number of existing derivatives of $\beta$ and where $q$ is related to the decay rate of $\lambda_j$ by assuming that $\sum_{j\geq k} \lambda_j=k^{-2q}$. Formally, their result compares to ours with $q\to\infty$, which would lead again to the same rate as ours.

With a consistent estimator for $L$, we can obtain the same rates as in Theorem~\ref{thmLfixknown}.

\begin{Corollary}\label{corLfixknown} Consider the same setup as in Theorem~\ref{thmLfixknown}, but assume that $L$ is replaced by a consistent estimator $\hat L$, then \eqref{e:eb} holds.
\end{Corollary}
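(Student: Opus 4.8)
The plan is to reduce everything to the fixed-$L$ result, exploiting the fact that when $L$ is fixed, consistency of $\hat L$ must mean convergence in probability to the integer $L$, i.e.\ $P(\hat L = L)\to 1$ as $T\to\infty$. First I would make precise that the entire estimation pipeline — the leading eigenvectors $\hat E$ of $\frac{1}{T+1}Z^\prime Z$, the rescaled scores $\hat F$, the OLS coefficients, and finally the predictor $\widehat Y_{T+1}$ — is a deterministic function of the observed data $Z$ and $Y_{(-)}$ once the order is specified. Writing $\widehat Y_{T+1}^{(k)}$ for the predictor obtained when the order is set to $k$, the quantity analyzed in Theorem~\ref{thmLfixknown} is $\widehat Y_{T+1}^{(L)}$ with the true $L$, whereas the Corollary concerns $\widehat Y_{T+1}^{(\hat L)}$.

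The key observation is that on the event $A_T:=\{\hat L = L\}$ the two predictors coincide identically as random variables, since the selected order feeds into every subsequent step in the same way. Hence $\widehat Y_{T+1}^{(\hat L)} - \widetilde Y_{T+1}$ equals $\widehat Y_{T+1}^{(L)} - \widetilde Y_{T+1}$ on $A_T$, and the latter is $O_P(r_T)$ with $r_T = 1/\sqrt{T} + 1/\sqrt{p}$ by Theorem~\ref{thmLfixknown}.

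To transfer the rate I would use the standard high-probability coupling. Fix $\eta>0$. By the definition of $O_P$ applied to Theorem~\ref{thmLfixknown}, choose $M$ so that $P(|\widehat Y_{T+1}^{(L)} - \widetilde Y_{T+1}| > M r_T) < \eta/2$ for all large $T$. Then decompose
$$P(|\widehat Y_{T+1}^{(\hat L)} - \widetilde Y_{T+1}| > M r_T) \le P\!\left(\{|\widehat Y_{T+1}^{(L)} - \widetilde Y_{T+1}| > M r_T\}\cap A_T\right) + P(A_T^c),$$
where I have used that the two predictors agree on $A_T$. The first term is bounded by $\eta/2$ and the second tends to $0$ by consistency, so $P(|\widehat Y_{T+1}^{(\hat L)} - \widetilde Y_{T+1}| > M r_T) < \eta$ for large $T$. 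This is precisely the statement that $\widehat Y_{T+1}^{(\hat L)} - \widetilde Y_{T+1} = O_P(r_T)$, which is \eqref{e:eb}.

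There is no deep analytic obstacle here; the argument is a routine reduction to the fixed-$L$ case. The only points requiring care are (i) interpreting consistency correctly for an integer-valued parameter, so that $\{\hat L = L\}$ is the relevant event rather than some shrinking neighborhood, and (ii) noting that the coupling does not require $\hat L$ to be independent of the data used in forming $\widehat Y_{T+1}$ — only the set inclusion on $A_T$ is invoked, so any dependence between the order selector and the predictor is harmless.
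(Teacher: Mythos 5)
Your argument is correct and is precisely the routine reduction the paper has in mind: the appendix offers no written proof of Corollary~\ref{corLfixknown}, stating only that it ``follows easily,'' and the intended reasoning is exactly your observation that consistency of an integer-valued $\hat L$ gives $P(\hat L = L)\to 1$, so the predictor with $\hat L$ coincides with the fixed-$L$ predictor on an event of probability tending to one and the $O_P$ bound of Theorem~\ref{thmLfixknown} transfers. Your care about integer-valued consistency and about not needing independence between the order selector and the data is appropriate; nothing further is required.
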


We deliberately do not further concretise estimating the number of factors $L$. This is generally a delicate problem, but in our context, where the focus is on prediction, $L$ can be easily tuned by cross-validation. We stress here that the out-of-sample prediction error is not necessarily minimized when we choose the correct value of $L$. Let us also note tuning the dimension is a problem which is inherent in other approaches as well. E.g.\ the consistency rates for the predictor obtained in \cite{caihall:2006} depend on correctly tuning the truncation parameter in the PCA estimator of $\beta$. The optimal truncation, leading to the obtained rates, depends on knowledge of the spectrum of $\Gamma^X$ and the decay rates of the scores of the predictor $x$ and the slope $\beta$, when these functions are expanded along the eigenfunctions of $\Gamma^X$. In practice, the truncation parameter is also tuned by cross-validation.

\begin{Theorem}\label{thm}
Let Assumptions~\ref{ass:L}--\ref{a:regrerror} hold. Assume that $p=p(T)\to\infty$ and $L=L(T)\to\infty$.  Then  the one-step prediction error is bounded by
$$\vert \widehat Y_{T+1}-\widetilde Y_{T+1} \vert = O_P\left(\left(\frac{p}{\hat\gamma_L}\right)^4 L^{11/2}\frac{1}{\lambda_L^3} \left( \frac{1}{\sqrt{T}} + \frac{1}{\sqrt{p}} \right) \right).$$
\end{Theorem}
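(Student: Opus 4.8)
The plan is to reduce the statement to the factor-score consistency results of \citet{hormann:jammoul:2021} and then propagate their rates through a factor-augmented least squares argument. Since all variables are centered, \eqref{e:regfact} reads $Y_t=b^\prime f_t+\varepsilon_t$, and because $\tfrac1{T+1}\hat F^\prime\hat F=I_L$ the predictor is the fitted value of an OLS regression: $\widehat Y_{T+1}=\hat f_{T+1}^\prime\hat b$ with $\hat b=\tfrac1{T+1}\hat F^\prime Y_{(-)}=\tfrac1{T+1}\sum_{t=1}^{T}\hat f_t Y_t$. The target is $\widetilde Y_{T+1}=f_{T+1}^\prime b$, so everything hinges on comparing $\hat f_{T+1}^\prime\hat b$ with $f_{T+1}^\prime b$. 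The cornerstone I would import is the existence of an invertible, data-dependent transformation matrix $H=H_T$ (the analogue of the rotation discussed in Remark~\ref{r:1}) together with explicit rates for the averaged score error $\tfrac1{T+1}\sum_{t=1}^{T+1}\|\hat f_t-Hf_t\|^2$, for the single new score $\|\hat f_{T+1}-Hf_{T+1}\|$, and for $\|H\|,\|H^{-1}\|$ and the near-orthogonality $\|H^\prime H-I_L\|$. These are exactly the quantities carrying the factors $p/\hat\gamma_L$, $L$ and $1/\lambda_L$ that eventually make up the prefactor.

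Next I would substitute $Y_t=b^\prime f_t+\varepsilon_t$ into $\hat b$ and split the error into a signal part and a noise part,
\[
\widehat Y_{T+1}-\widetilde Y_{T+1}=\underbrace{\Big(\tfrac1{T+1}\hat f_{T+1}^\prime\sum_{t=1}^T\hat f_t f_t^\prime b-f_{T+1}^\prime b\Big)}_{A}+\underbrace{\tfrac1{T+1}\hat f_{T+1}^\prime\sum_{t=1}^T\hat f_t\varepsilon_t}_{B}.
\]
For $B$ I would condition on the $\sigma$-field generated by $(X_t)$ and $(U_t)$, so that $\hat F$ and $\hat f_{T+1}$ are fixed and $(\varepsilon_t)$ is independent zero-mean; the conditional variance is at most $\tfrac{\sigma_\varepsilon^2}{(T+1)^2}\hat f_{T+1}^\prime\big(\sum_{t=1}^{T+1}\hat f_t\hat f_t^\prime\big)\hat f_{T+1}=\tfrac{\sigma_\varepsilon^2}{T+1}\|\hat f_{T+1}\|^2$ by the normalization $\tfrac1{T+1}\hat F^\prime\hat F=I_L$, which gives $B=O_P(\|\hat f_{T+1}\|/\sqrt T)$. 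For $A$ I would write $\hat f_t=Hf_t+d_t$ everywhere, replace $\tfrac1{T+1}\sum_{t=1}^T f_t f_t^\prime$ by $I_L$ (a law of large numbers for the $L^4$-$m$-approximable scores, plus the negligible boundary term from dropping $t=T+1$), and use $H^\prime H\approx I_L$. The leading term then collapses to $f_{T+1}^\prime b=\widetilde Y_{T+1}$ and cancels, leaving remainders built from $d_t$, from $\|H^\prime H-I_L\|$, and from the sample-second-moment fluctuation of the scores.

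Each remainder I would bound by Cauchy--Schwarz, pairing the averaged error $\big(\tfrac1{T+1}\sum_t\|d_t\|^2\big)^{1/2}$ with $\big(\tfrac1{T+1}\sum_t\|f_t\|^2\big)^{1/2}=O_P(\sqrt L)$, and using $\|f_{T+1}\|=O_P(\sqrt L)$, $\|b\|=O(1)$ (since $\|b\|^2\le\sum_\ell b_\ell^2$ is controlled by $\|\beta\|^2$), together with the bounds on $\|H\|,\|H^{-1}\|,\|d_{T+1}\|$. Collecting $A$ and $B$, multiplying out the score-estimation prefactors against these repeated $\sqrt L$- and $1/\lambda_L$-type factors, and keeping track of the worst power, should reproduce the stated blow-up $(p/\hat\gamma_L)^4 L^{11/2}\lambda_L^{-3}$ multiplying the clean rate $T^{-1/2}+p^{-1/2}$.

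The hard part will not be any single inequality but the bookkeeping of the diverging quantities: because $L\to\infty$, $\lambda_L\to0$ and $\hat\gamma_L$ is random, every Cauchy--Schwarz step and every norm bound injects further powers of $L$, $1/\lambda_L$ and $p/\hat\gamma_L$, and the delicate point is to show that these accumulate no faster than the claimed prefactor. A second subtlety is that $\hat f_{T+1}$ is \emph{not} independent of the estimated factor space, since $Z_{T+1}$ enters the PCA (cf.\ Remark~\ref{r:1}); hence the pointwise score bound at $t=T+1$ and the near-orthogonality $H^\prime H\approx I_L$ must be available with explicit rates, uniformly over $t$, rather than merely as $o_P(1)$ statements, for the cancellation in $A$ to be legitimate.
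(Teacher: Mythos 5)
Your proposal follows essentially the same route as the paper: both reduce the problem to the score-consistency rates of \citet{hormann:jammoul:2021} via the same rotation matrix $H$, substitute the linear model $Y_t=b^\prime f_t+\varepsilon_t$ into the OLS fit, isolate the term governed by $\Vert H^\prime H-I_L\Vert$ (which indeed dominates and produces the prefactor $(p/\hat\gamma_L)^4L^{11/2}\lambda_L^{-3}$), and control the remainders by Cauchy--Schwarz with $\Vert f_{T+1}\Vert=O_P(\sqrt L)$, $\Vert b\Vert=O(1)$ and the averaged score error. The only differences are organizational (the paper splits the error into terms $A_1,A_2,A_3,B_1,B_2,C$ and bounds the noise term via Markov's inequality on $\Vert F^\prime\varepsilon/(T+1)\Vert$ rather than by conditioning), so your plan is sound.
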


\begin{Corollary}\label{cor}
Suppose that the assumptions of Theorem~\ref{thm} hold. Furthermore, assume that for some $\nu>0$ and some $\rho>0$ we have $\lambda_j\geq \rho j^{-\nu}$ and that there is some $\alpha>0$ such that $p/\hat\gamma_L=O_P(L^{\alpha})$. Then  the one-step prediction error is bounded by
$$\vert \widehat Y_{T+1}-\widetilde Y_{T+1} \vert = O_P\left(L^{4\alpha + 3 \nu + 11/2} \left( \frac{1}{\sqrt{T}} + \frac{1}{\sqrt{p}} \right) \right).$$
\end{Corollary}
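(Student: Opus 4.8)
The plan is to treat Corollary~\ref{cor} as a direct specialization of Theorem~\ref{thm}: the theorem already supplies the master bound
$$
\vert \widehat Y_{T+1}-\widetilde Y_{T+1} \vert = O_P\left(\left(\frac{p}{\hat\gamma_L}\right)^4 L^{11/2}\frac{1}{\lambda_L^3} \left( \frac{1}{\sqrt{T}} + \frac{1}{\sqrt{p}} \right) \right),
$$
so the entire task reduces to controlling the two problem-dependent quantities $(p/\hat\gamma_L)^4$ and $\lambda_L^{-3}$ in terms of $L$ under the two new hypotheses. First I would invoke the eigenvalue lower bound $\lambda_j\geq \rho j^{-\nu}$ at the index $j=L$, which immediately gives $\lambda_L^{-1}\leq \rho^{-1}L^{\nu}$ and hence $\lambda_L^{-3}=O(L^{3\nu})$, the constant $\rho^{-3}$ being absorbed into the $O_P$.

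Next I would handle the factor-strength term. The hypothesis $p/\hat\gamma_L=O_P(L^{\alpha})$ gives $(p/\hat\gamma_L)^4=O_P(L^{4\alpha})$ directly, so there is nothing to derive here beyond raising the assumed rate to the fourth power. I would then substitute both bounds into the master rate and collect the powers of $L$: the prefactor becomes $L^{4\alpha}\cdot L^{11/2}\cdot L^{3\nu}=L^{4\alpha+3\nu+11/2}$, which is exactly the exponent claimed in the statement. Since all three estimates hold in probability (the eigenvalue bound deterministically, the factor-strength bound in probability), their product is again $O_P$ of the product, and multiplying by the common sampling term $(1/\sqrt{T}+1/\sqrt{p})$ yields the asserted bound.

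The argument is essentially bookkeeping, so I do not anticipate a genuine obstacle; the only point requiring a little care is the legitimacy of multiplying the stochastic orders. Here one should note that $\lambda_L^{-3}=O(L^{3\nu})$ is a deterministic bound and $L^{11/2}$ is deterministic, while only $(p/\hat\gamma_L)^4$ carries randomness, so the product of an $O_P(L^{4\alpha})$ term with deterministic $O(L^{3\nu+11/2})$ factors is unambiguously $O_P(L^{4\alpha+3\nu+11/2})$, and combining with the $O_P$ from Theorem~\ref{thm} uses the standard fact that $O_P(a_T)\cdot O_P(b_T)=O_P(a_Tb_T)$. One might also remark that the hypotheses are precisely those that make every $L$-dependent factor in the theorem polynomially bounded, which is the conceptual reason the corollary holds; nothing further about the joint behavior of $p$, $T$ and $L$ is needed since those are already constrained inside the proof of Theorem~\ref{thm}.
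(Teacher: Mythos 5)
Your proposal is correct and matches the paper's intent exactly: the paper offers no separate proof of Corollary~\ref{cor}, stating only that it ``follows easily'' from Theorem~\ref{thm}, and the direct substitution $\lambda_L^{-3}\leq\rho^{-3}L^{3\nu}$ and $(p/\hat\gamma_L)^4=O_P(L^{4\alpha})$ into the theorem's bound is precisely that easy deduction. Your added care about multiplying deterministic and stochastic order bounds is sound and complete.
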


Polynomial decay rates for the eigenvalues are commonly assumed in related literature. The condition $p/\hat\gamma_L=O_P(L^{\alpha})$ is discussed in \cite{hormann:jammoul:2021} and can be established if we assume equidistant sampling points and the additional assumption $\sup_s E|X_t(s+h)-X_t(s)|^2=O(h)$ as $h\to 0$. The factor $L^{4\alpha + 3 \nu + 11/2}$ may be viewed as a non-parametric convergence rate due to the increase in the dimension of our model. Thus, if $L$ is growing at slow enough polynomial rate we get the convergence in Corollary~\ref{cor}.

\section{Simulation study}\label{s:sim}

In order to demonstrate our approach and draw comparisons to common techniques, we present a comprehensive simulation study. To this end, we adapt a set of real data to serve as predictors for the simulation setting. The dataset \texttt{pm10} consists of bi-hourly measurements of particulate matter PM10 in Graz, Austria, from October 1st 2010 to March 31st 2011. This dataset has 48 observations over the course of 182 days. To control the smoothness of the underlying signal of our predictor, we pre-smooth this dataset using $21$ cubic B-splines. This resulting functional data object is evaluated at $p=48,96,192$ intraday points. Then, we pull a bootstrap sample of $T=100,200,500,1000$ curves. The resulting curves represent the underlying signal $X_t$, for $t=1, \ldots, T$, which have been observed at the equidistant points $s_j \in \lbrack 0,1 \rbrack$ $j=1, \ldots, p$. In a final step, we add iid normal distributed noise $U_{tj}$ to obtain noisy observations $Z_t(s_j) = X_t(s_j) + U_{tj}$. We consider $U_{tj} \sim N(0, \sigma_U^2)$ with two settings $\sigma_U = 2$ and $\sigma_U= 5$. The resulting curves $Z_t$ represent the noisy predictor that we actually observe. One such observation is illustrated in Figure~\ref{fig:pm10}.

\begin{figure}
\includegraphics[width=7cm]{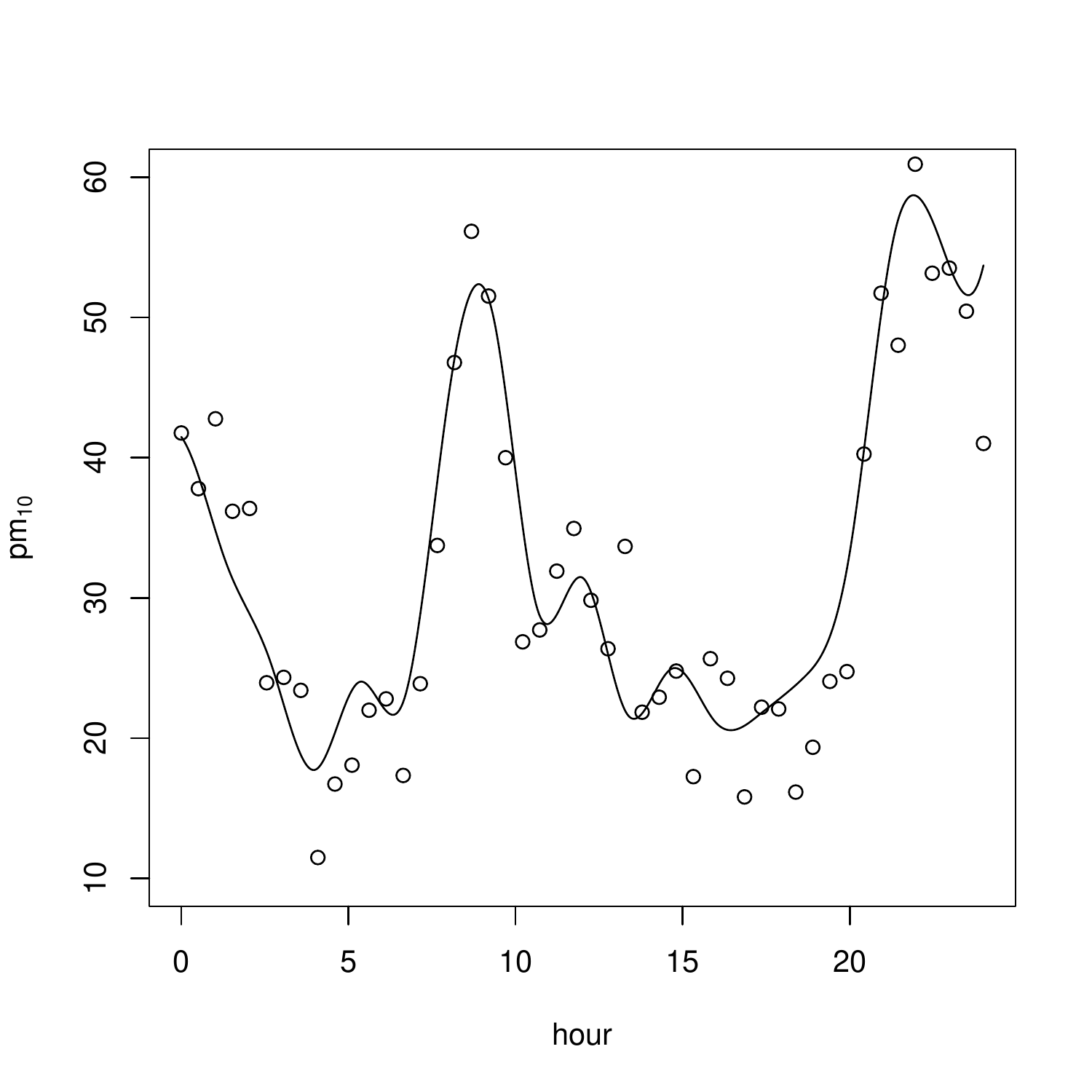}\hfill
\includegraphics[width=7cm]{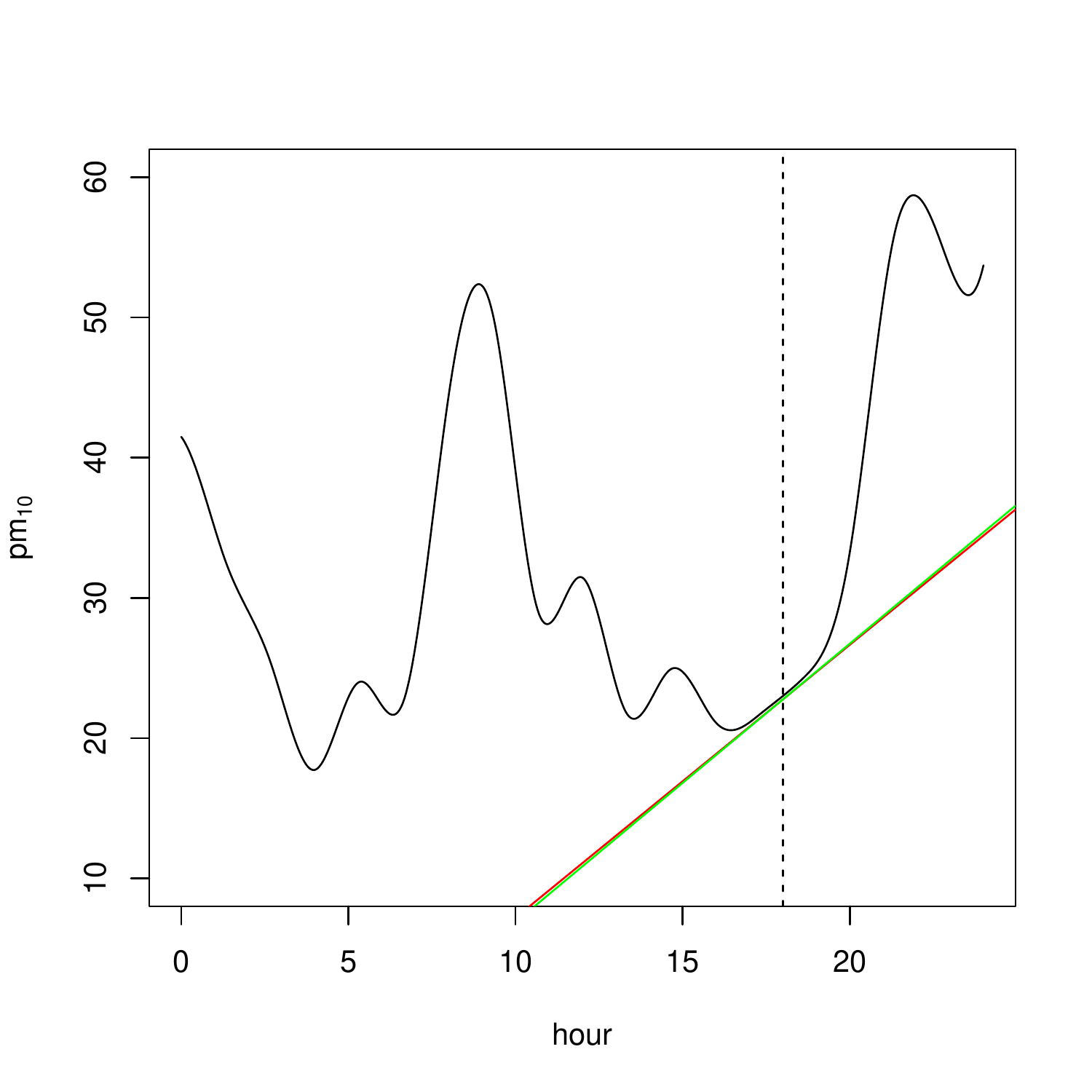}
\caption{Left: Signal (solid line) and signal-plus-noise with $\sigma_U=5$ (dots). Right: Slope of $X_t(s)$ the signal at time $x_0=18$ (green) and approximation of this slope by $\int_0^{24}\beta_{\text{rough}}(s)X_t(s)ds$ (red).}
\label{fig:pm10}
\end{figure}

In the next step, we set up two functional regression models using two very different types of slopes $\beta(s)$. First, we consider the very smooth slope function $\beta_{\text{smooth}}(s) = 10\sin^3{2\pi s^3}$. This particular slope has previously been investigated in \citet{cardotetal:2007}. 
Additionally, we investigate the effects of a very rough slope $$\beta_{\text{rough}}(s):=-1/\epsilon^2 I\{s\in [x_0-\epsilon, x_0)\}+1/\epsilon^2I\{s\in [x_0,x_0+\epsilon]\}.$$
One can see that integration with this slope function is meant to emulate $X_t^\prime(x_0)$ when a small $\epsilon$ is chosen. Since keeping the time scale $s\in [0,1]$ will lead to rather large response values, we have chosen to switch to $s\in [0,24]$. Then $\int_0^{24}\beta_{\text{rough}}(s) X_t(s)ds\approx X_t^\prime(x_0)$ can be interpreted as the rate of change of the {\tt pm10} level at time $x_0$ in $\mu g/(m^3h)$. (See right-hand plot in Figure~\ref{fig:pm10}.) In our simulation we choose $\epsilon=24/100$ and $x_0=18$, which corresponds to the tail end of the evening rush hour. 
In a final step, we obtain the response $Y_t$ for both slopes by setting $$Y_t = \int \beta(s) X_t(s) ds + \varepsilon_t $$ where $\varepsilon_t \sim N(0,\sigma_\varepsilon^2)$ with values $\sigma_\varepsilon \in\{2,5\}$.

Subsequently, we compare three different models: the linear regression model using the recovered factors $f_t$ of $Z_t$ as predictors (\textbf{LM}), the smoothing splines approach for noisy predictors as established in \citet{crambesetal:2009} and \citet{cardotetal:2007} (\textbf{Smooth}) and the functional linear regression with points of impact approach as described in \citet{kneipetal:2016} and combined with its improved estimation in \citet{liebletal:2020} (\textbf{PoI}).  We note that the method \textbf{PoI} is not designed to accommodate noisy observations, and hence the results are not directly compatible with our setup. However, we also experimented with the smoothing splines approach for non-noisy predictors as described in \citet{crambesetal:2009}. These results very closely resembled the \textbf{Smooth} approach and hence it is not unreasonable to assume that \textbf{PoI} might give competitive results.
We also investigated the use of a pre-smoothing step as suggested in \citet{crambesetal:2009}. Here, for more complex noise structures, it was suggested to use a technique to recover the signal from the noisy predictor first and using this estimate to subsequently estimate the functional linear model with  \citet{crambesetal:2009}.
We found, however, that this extra step didn't give any significant improvement in the predictions compared to the  \textbf{Smooth} approach, which is why we will not elaborate on these results further.

It is important to note that for the \textbf{LM} approach   one needs to first estimate the number of factors $L$ required to represent the underlying signal. 

Since in this paper we are interested in prediction, we use generalized cross-validation to obtain an estimate for the number of factors. To this end we choose an upper limit $L_{\text{max}}$ and proceed to fit the linear model with all possible choices for $\ell \leq L_{\text{max}}$. We choose the number of factors that minimizes the GCV-score $$\text{GCV}(\ell) = \frac{T^{-1}\sum_{t=1}^T(Y_t - \widehat Y_{t,\ell})^2}{(1-\ell/T)^2}.$$ In a linear regression this score is a numerically very efficient coefficient which approximates the leave-one-out cross-validation error (see e.g. \citet{hastieetal:2001}). Note that here, the factors have to only be estimated once using the maximum number $L_{\text{max}}$ and may then be added one by one into the linear model to then calculate $\text{GCV}(\ell)$, which helps speed up the implementation especially for large datasets. We have set $L_{\text{max}} = 25$.

For the implementation of \textbf{PoI} we use the package \texttt{FunRegPoI} as provided in the supplementary material to \citet{liebletal:2020}. This method uses a modified version of the approach described in \citet{crambesetal:2009} for the estimation of the $\beta(s)$ and $\beta_k$. 
In the smooth setting the number of points of impact is 0. In this case, the estimation corresponds to the method for non-noisy data as described in \citet{crambesetal:2009}.  
Due to the spiked nature of the slope function $\beta_{\text{rough}}$ we expect that the method might indicate points of impact in this setting. 

The implementation of the method \textbf{PoI} requires a choice of a maximum number of points of impact $S_{\text{max}}$, for which we have found the suggested choice $S_{\text{max}}=8$ to be sufficient. To apply the methods of \citet{crambesetal:2009} and \citet{liebletal:2020}, one must initially choose the order of the smoothing splines estimators. In accordance with both references, we have chosen to use cubic smoothing splines. Furthermore, a smoothing parameter must be estimated in the process, which has been achieved via generalized cross validation in analogy to the references.

In Tables~\ref{tab:simsmooth} and~\ref{tab:simrough} we demonstrate the predictive performances of the different approaches mentioned above. To create these numbers, the dataset was extended by 100 testvalues stemming from the respective model. Then, for the $i$-th simulation run, we define 
$$\text{SSE}^{\text{appr}}_i = \frac{1}{100}\sum_{t=1}^{100}(\widetilde{Y}_{T+t} - \widehat{Y}_{T+t})^2,$$
where $\widehat{Y}_t$ refers to the predicted response. We repeat this simulation for each setting 200 times and report the average of the $\text{SSE}^{\text{appr}}_i$ for $i=1,\ldots, 200$, which we denote by $\text{SSE}^{\text{appr}}$. We also report the median for the estimated number of points of impact $\hat S$ and the median for the chosen number of factors $\hat L$. We remark that the true $L$ in our chosen setup is $21$.

The approaches \textbf{Smooth} and \textbf{LM} appear to have different strengths when we work with the smooth slope function. The \textbf{LM} approach performs best through the setups for large sample sizes and smaller regression errors. In turn, the method \textbf{Smooth} works particularly well when we have a small sample size $T$ and larger regression errors. The \textbf{PoI} approach is less competitive for this data. It erroneously indicates points of impact. We thus conjecture that the method is not robust to noisy observations. In the case of $\beta_{\text{rough}}$ (Table~\ref{tab:simrough}), the \textbf{PoI} can exploit its strength in cases of small $p$ and $T$ but still suffers in other setups. Here \textbf{LM} is generally performing best. The most favourable situation for \textbf{LM} is when $p$ is small and $T$ is large. It is also of note that in the case of the smooth slope, the number of factors chosen by the GCV tends to be smaller than the true number of factors $21$, whereas for the rough slope $\hat{L}$ is close to its actual value.

\begin{table}
\centering
\begingroup\small
\begin{tabular}{ccc|ccccc|ccccc}
  \toprule \multicolumn{3}{c}{Dimensions} & \multicolumn{1}{|c}{} & \multicolumn{4}{c}{$\text{SSE}^{\text{appr}}$ ($\sigma_U$ = 2)} & \multicolumn{1}{|c}{} & \multicolumn{4}{c}{$\text{SSE}^{\text{appr}}$ ($\sigma_U$ = 5)}\\$p$ & $T$ & $\sigma_\varepsilon$ & $\hat{L}$ & $\hat{S}$ & PoI & Smooth & LM & $\hat{L}$ & $\hat{S}$ & PoI & Smooth & LM \\ 
  \hline
48 & 100 & 2 & 15 & 3 & 3.63 & \textbf{3.13} & 3.66 & 13 & 2 & 14.23 & \textbf{13.14} & 13.91 \\ 
  48 & 200 & 2 & 16 & 3 & 2.75 & 2.5 & \textbf{2.46} & 14 & 1 & 12.07 & \textbf{11.73} & 11.87 \\ 
  48 & 500 & 2 & 19 & 3 & 2.22 & 2.17 & \textbf{1.97} & 16 & 2 & 10.96 & 10.94 & \textbf{10.77} \\ 
  48 & 1000 & 2 & 20 & 4 & 2.07 & 2.09 & \textbf{1.84} & 17 & 2 & 10.54 & 10.6 & \textbf{10.37} \\ 
   &  &  &  &  &  &  &  &  &  & & \\ 
96 & 100 & 2 & 15 & 3 & 2.92 & \textbf{2.26} & 2.58 & 14 & 1 & 8.18 & \textbf{7.13} & 7.85 \\ 
  96 & 200 & 2 & 16 & 3 & 1.84 & 1.62 & \textbf{1.49} & 15 & 1 & 6.44 & \textbf{6.03} & 6.2 \\ 
  96 & 500 & 2 & 19 & 3 & 1.39 & 1.36 & \textbf{1.1} & 16 & 1 & 5.79 & 5.55 & \textbf{5.53} \\ 
  96 & 1000 & 2 & 20 & 4 & 1.20 & 1.25 & \textbf{0.94} & 18 & 2 & 5.60 & 5.47 & \textbf{5.34} \\ 
   &  &  &  &  &  &  &  &  &  & & \\ 
192 & 100 & 2 & 15 & 3 & 2.28 & \textbf{1.76} & 1.85 & 15 & 1 & 4.92 & \textbf{3.99} & 4.68 \\ 
  192 & 200 & 2 & 17 & 3 & 1.40 & 1.2 & \textbf{0.98} & 16 & 0 & 3.81 & \textbf{3.43} & 3.53 \\ 
  192 & 500 & 2 & 17 & 4 & 0.94 & 0.94 & \textbf{0.62} & 17 & 1 & 3.28 & 3.1 & \textbf{2.93} \\ 
  192 & 1000 & 2 & 19 & 4 & 0.81 & 0.88 & \textbf{0.53} & 18 & 2 & 3.07 & 2.97 & \textbf{2.71} \\ 
   \midrule
48 & 100 & 5 & 13 & 3 & 9.64 & \textbf{6.98} & 9.9 & 12 & 1 & 20.13 & \textbf{16.8} & 19.25 \\ 
  48 & 200 & 5 & 14 & 3 & 5.15 & \textbf{4.02} & 4.78 & 13 & 1 & 14.70 & \textbf{13.68} & 14.43 \\ 
  48 & 500 & 5 & 15 & 3 & 3.19 & \textbf{2.73} & 2.95 & 14 & 1 & 11.76 & 11.53 & \textbf{11.52} \\ 
  48 & 1000 & 5 & 16 & 3 & 2.50 & \textbf{2.29} & 2.3 & 15 & 1 & 11.16 & 11.03 & \textbf{10.93} \\ 
   &  &  &  &  &  &  &  &  &  & & \\ 
96 & 100 & 5 & 13 & 2 & 9.48 & \textbf{5.95} & 8.67 & 13 & 1 & 14.41 & \textbf{10.7} & 13.86 \\ 
  96 & 200 & 5 & 14 & 3 & 4.70 & \textbf{3.16} & 3.97 & 14 & 1 & 9.10 & \textbf{7.79} & 8.63 \\ 
  96 & 500 & 5 & 15 & 2 & 2.23 & \textbf{1.83} & 1.98 & 15 & 0 & 6.73 & \textbf{6.25} & 6.43 \\ 
  96 & 1000 & 5 & 16 & 3 & 1.76 & 1.53 & \textbf{1.39} & 16 & 0 & 6.12 & \textbf{5.88} & 5.88 \\ 
   &  &  &  &  &  &  &  &  &  & & \\ 
192 & 100 & 5 & 13 & 3 & 9.05 & \textbf{5.15} & 7.66 & 13 & 1 & 11.33 & \textbf{7.83} & 10.74 \\ 
  192 & 200 & 5 & 14 & 3 & 4.43 & \textbf{2.76} & 3.53 & 14 & 0 & 6.53 & \textbf{5.07} & 5.87 \\ 
  192 & 500 & 5 & 15 & 3 & 1.97 & \textbf{1.48} & 1.55 & 15 & 0 & 4.06 & \textbf{3.66} & 3.78 \\ 
  192 & 1000 & 5 & 16 & 3 & 1.32 & 1.13 & \textbf{0.96} & 16 & 0 & 3.50 & 3.24 & \textbf{3.2} \\ 
   \bottomrule
  \end{tabular}
  \endgroup
\caption{Simulation Results for the synthetic \texttt{PM10} data with \textit{smooth} slope function $\beta_{\text{smooth}}(s)$.} 
\label{tab:simsmooth}
\end{table}

\begin{table}
\centering
\begingroup\small
\begin{tabular}{ccc|ccccc|ccccc}
  \toprule \multicolumn{3}{c}{Dimensions} & \multicolumn{1}{|c}{} & \multicolumn{4}{c}{$\text{SSE}^{\text{appr}}$ ($\sigma_U$ = 2)} & \multicolumn{1}{|c}{} & \multicolumn{4}{c}{$\text{SSE}^{\text{appr}}$ ($\sigma_U$ = 5)}\\$p$ & $T$ & $\sigma_\varepsilon$ & $\hat{L}$ & $\hat{S}$ & PoI & Smooth & LM & $\hat{L}$ & $\hat{S}$ & PoI & Smooth & LM \\ 
  \hline
48 & 100 & 2 & 21 & 5 & 7.89 & 7.81 & \textbf{6.97} & 19 & 4 & \textbf{24.85} & 35.67 & 28.92 \\ 
  48 & 200 & 2 & 21 & 6 & 6.72 & 5.66 & \textbf{5.29} & 21 & 5 & \textbf{22.04} & 24.89 & 22.15 \\ 
  48 & 500 & 2 & 21 & 7 & 5.72 & 4.51 & \textbf{4.2} & 21 & 5 & 20.75 & 20.68 & \textbf{18.97} \\ 
  48 & 1000 & 2 & 21 & 7 & 5.36 & 4.23 & \textbf{3.98} & 21 & 6 & 19.63 & 19.85 & \textbf{18.27} \\ 
   &  &  &  &  &  &  &  &  &  & & \\ 
96 & 100 & 2 & 21 & 5 & 7.87 & 4.71 & \textbf{4.64} & 20 & 4 & 26.09 & 19.56 & \textbf{18.68} \\ 
  96 & 200 & 2 & 21 & 6 & 6.43 & 2.99 & \textbf{2.94} & 21 & 5 & 22.72 & 13.81 & \textbf{13.61} \\ 
  96 & 500 & 2 & 21 & 7 & 4.96 & 2.35 & \textbf{2.3} & 21 & 6 & 20.9 & 11.7 & \textbf{11.49} \\ 
  96 & 1000 & 2 & 21 & 8 & 4.21 & 2.17 & \textbf{2.12} & 21 & 7 & 18.72 & 10.93 & \textbf{10.54} \\ 
   &  &  &  &  &  &  &  &  &  & & \\ 
192 & 100 & 2 & 21 & 6 & 7.85 & \textbf{3.03} & 3.09 & 20 & 4 & 25.83 & \textbf{10.19} & 10.74 \\ 
  192 & 200 & 2 & 21 & 6 & 6.26 & \textbf{1.81} & 1.83 & 20 & 6 & 21.49 & \textbf{7.56} & 7.8 \\ 
  192 & 500 & 2 & 21 & 7 & 4.96 & 1.31 & \textbf{1.29} & 21 & 7 & 18.75 & \textbf{6.29} & 6.38 \\ 
  192 & 1000 & 2 & 21 & 8 & 4.32 & 1.16 & \textbf{1.13} & 21 & 7 & 16.72 & 6.09 & \textbf{6.03} \\ 
   \midrule
48 & 100 & 5 & 21 & 4 & \textbf{12.27} & 17.29 & 17.12 & 18 & 4 & \textbf{30.52} & 43.5 & 37.43 \\ 
  48 & 200 & 5 & 21 & 4 & 8.61 & 8.69 & \textbf{8.22} & 20 & 4 & \textbf{23.87} & 28.64 & 25.96 \\ 
  48 & 500 & 5 & 21 & 6 & 6.91 & 5.75 & \textbf{5.42} & 21 & 5 & 21.7 & 22.18 & \textbf{20.48} \\ 
  48 & 1000 & 5 & 21 & 7 & 5.88 & 4.75 & \textbf{4.47} & 21 & 6 & 20.52 & 20.63 & \textbf{18.88} \\ 
   &  &  &  &  &  &  &  &  &  & & \\ 
96 & 100 & 5 & 20 & 3 & \textbf{11.5} & 11.94 & 12.97 & 19 & 4 & 30.12 & 28.3 & \textbf{27.14} \\ 
  96 & 200 & 5 & 21 & 4 & 8.52 & \textbf{5.95} & 6.09 & 20 & 5 & 25.38 & \textbf{16.88} & 16.89 \\ 
  96 & 500 & 5 & 21 & 6 & 6.18 & 3.45 & \textbf{3.44} & 21 & 5 & 21.57 & 12.76 & \textbf{12.5} \\ 
  96 & 1000 & 5 & 21 & 7 & 4.94 & 2.63 & \textbf{2.59} & 21 & 6 & 20.15 & 11.65 & \textbf{11.33} \\ 
   &  &  &  &  &  &  &  &  &  & & \\ 
192 & 100 & 5 & 19 & 3 & 11.36 & \textbf{10.15} & 11.06 & 19 & 4 & 30.84 & \textbf{18.31} & 18.79 \\ 
  192 & 200 & 5 & 20 & 4 & 8.63 & \textbf{4.76} & 5.04 & 20 & 5 & 24.24 & \textbf{10.47} & 10.75 \\ 
  192 & 500 & 5 & 21 & 6 & 5.99 & \textbf{2.37} & 2.4 & 20 & 6 & 19.67 & \textbf{7.29} & 7.33 \\ 
  192 & 1000 & 5 & 21 & 7 & 5.14 & \textbf{1.65} & \textbf{1.65} & 21 & 7 & 17.85 & 6.68 & \textbf{6.61} \\ 
   \bottomrule
  \end{tabular}
\endgroup
\caption{Simulation Results for the synthetic \texttt{PM10} data with \textit{rough} slope function $\beta_{\text{rough}}(s)$.} 
\label{tab:simrough}
\end{table}

\section{Data illustration}\label{s:realdata}
We consider weather-related data recorded in Canadian Weather stations in the neighbouring provinces of Quebec and Ontario. Here, daily mean temperature measurements in the year of 2013 are considered. These yearly curves will serve as our predictors, which we will use to model cumulative log-precipitation in the same year for each station. This type of model has been considered before in \citet{ramsaysilverman:2005}, but we use an extended dataset available at \url{https://climate.weather.gc.ca/} including more curves. 

In an initial data-clean-up phase we eliminate weather stations  for which more than $15 \%$ of the temperature data is missing. For the remaining missing values, we simply impute them by taking the mean of two adjacent values. Is no such mean available, we simply repeat the last observation. For the associated response, the same removal and imputation methods were used. On occasion, multiple stations with the same name and same geographical position were found in the data. These copies were removed from considerations, as it was difficult to ascertain their nature. After these steps, we have $T=193$ weather stations with $p=365$ observations for each predictor curve. A selection of which can be seen in Figure~\ref{fig:canadaweather}. We observe the typical seasonal shape associated with temperature data. The data is then separated randomly into a trainingset of size $T_1 = 143$ and a testset of size $T_2 = 50$. For our analysis the response as well as the predictors have been centered. In a first step, we compute the estimated number of factors using the generalized cross-validation technique. The results of \textbf{LM} are subsequently compared to the methods \textbf{Smooth} and \textbf{PoI}.  This split into training- and test set is then repeated $30$ times with randomly chosen sets. For the mean sum of squared out-of-sample errors and corresponding standard deviation we obtained 0.0393 (0.0073) for \textbf{Smooth}, 0.033 (0.0083) for \textbf{PoI} and 0.0307 (0.0084) for \textbf{LM}. In comparison, the empirical variance of the log-precipitation is $0.0457$ for the total data set. Hence, roughly we can say that our predictions explain approximately $33 \%$ (\textbf{LM}), $27 \%$ (\textbf{PoI}) and $14 \%$ (\textbf{Smooth}) of the variance. We have additionally investigated the difference in the estimated squared out of sample errors in each of the $30$ instances using two-sided paired t-tests. The respective $p$-values are $p=0.2817$ (comparing \textbf{LM} and \textbf{PoI}) $p=8.608 \times 10^{-5}$ (comparing \textbf{LM} and \textbf{Smooth}). Obviously, the errors between the test sets are not independent and hence most likely the actual differences are less significant. So we cannot conclude to have significantly better predictions from \textbf{LM} compared to \textbf{PoI}. However,  the marked difference between \textbf{LM} and \textbf{Smooth} presumably cannot be solely attributed to this fact. 

To conclude this prediction exercise, we remark that the estimated number of factors $\hat L$ was mostly between $26-33$. 
The PoI approach estimated between 3 and 5 points of impact, where 2 April (15 times), 25 September (10 times) and 25 May (8 times) were chosen most often.

\begin{figure}
\centering
\includegraphics[width=12cm]{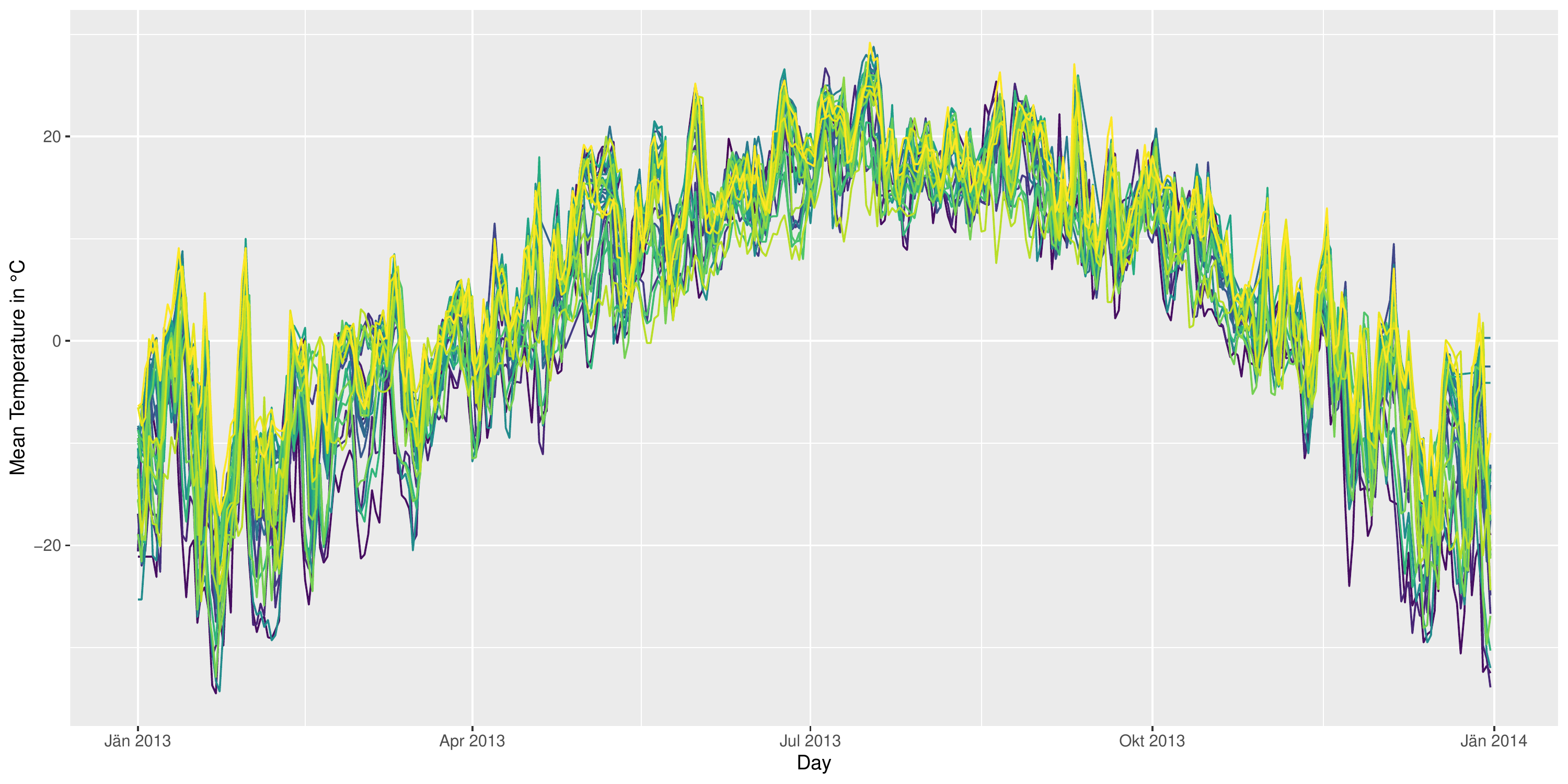}
\caption{Mean Temperature curves in $2013$ for $30$ out of $193$ available Canadian Weather Stations in the Province of Quebec and Ontario }\label{fig:canadaweather}
\end{figure}

\section{Conclusion}\label{s:conclusion}

We have been considering scalar-on-function regression with discretely observed and noisy predictors. Although we impose a popular functional model underlying our data, we can tackle this problem very well from a purely multivariate perspective; this is our key message. The approach we propose is rooted in factor model analysis. It does not require smoothness assumptions on the functional predictors, can accommodate dependent covariates and also allows for dependent sampling errors. We obtain error rates to the theoretically optimal prediction which are comparable to existing results, but work under fewer and simpler assumptions. From a practical side, we show in simulations and real data that our approach gives very convincing results in cases of both smooth and very irregular slope functions. An additional asset is its simple implementation and its fast running time.
\appendix
\section{Proofs}
We will prove Theorem~\ref{thm}. The proofs of Theorem~\ref{thmLfixknown}, Corollary~\ref{corLfixknown} and Corrollary~\ref{cor} follow easily from this result.
\subsection{Decomposing the prediction error}

To expand the prediction error we require further notation. Let use write $B=B(\bs)$ and introduce
$$
H = \frac{1}{T+1}\hat{\Lambda}^{-1} \hat{F}^\prime F B'B,
$$
where $\hat{\Lambda}=\mathrm{diag}(\hat\gamma_1,\ldots,\hat\gamma_L)$ with $\hat\gamma_1\geq \cdots\geq \hat\gamma_L$ being the $L$ largest empirical eigenvalues of $\frac{1}{T+1}Z'Z$. The matrix $H$ takes the role of $G$ in Remark~\ref{r:1}. It can be shown to be asymptotically orthogonal and is used to fix an orientation. We write the regression model in the form \eqref{e:regfact} and then obtain 
$$
\widetilde{Y}_{T+1} = f_{T+1}^\prime H^\prime H b + f_{T+1}^\prime(I_L - H'H)b.
$$
Next, let us write \eqref{e:regfact} in vector notation as
$$
Y=Fb+\varepsilon,
$$
where $\varepsilon=(\varepsilon_1,\ldots,\varepsilon_{T+1})^\prime$. From this we obtain the equation
$$
 b= \frac{1}{T+1}F^\prime Y  - \left(\frac{1}{T+1} F^\prime F  - I_L\right)b - \frac{1}{T+1}F^\prime \varepsilon.
$$

Now we expand the distance between our predictor $\widehat{Y}_{T+1}$ and the best possible predictor $\widetilde{Y}_{T+1}$ in a number of terms which we then shall   bound one by one:
\begin{align*}
\widehat{Y}_{T+1} - \widetilde{Y}_{T+1} &= \frac{1}{T+1}\hat{f}_{T+1}^\prime\hat{F}^\prime Y_{(-)} - f_{T+1}^\prime H^\prime H b - f_{T+1}^\prime(I_L - H^\prime H)b \\
&=  \frac{1}{T+1}\hat{f}_{T+1}^\prime\hat{F}^\prime Y_{(-)} -  \frac{1}{T+1}f_{T+1}^\prime H^\prime H F^\prime  Y \\
&\quad +  f_{T+1}^\prime H^\prime H\left(\frac{1}{T+1} F^\prime F - I_L\right)b  + \frac{1}{T+1}f_{T+1}^\prime H^\prime H F^\prime \varepsilon \\
&\quad +  f_{T+1}^\prime (H^\prime H-I_L )b \\
&=: A + B + C,
\end{align*}
where $A$, $B$ and $C$ correspond to the respective lines of the right hand side of the equation above.

We further expand
\begin{align*}
A &= \frac{1}{T+1}\hat{f}_{T+1}^\prime \hat{F}^\prime Y_{(-)} - \frac{1}{T+1}f_{T+1}^\prime H^\prime HF^\prime Y \\
&= (\hat{f}_{T+1}^\prime - f_{T+1}^\prime H^\prime)\frac{1}{T+1}\hat{F}^\prime Y_{(-)}+ \frac{1}{T+1}f_{T+1}^\prime H^\prime (\hat{F}^\prime Y_{(-)}- HF^\prime  Y ) \\
&= (\hat{f}_{T+1}^\prime - f_{T+1}^\prime H^\prime)\frac{1}{T+1}\hat{F}^\prime Y_{(-)}\\
&\quad + \frac{1}{T+1}f_{T+1}^\prime H^\prime  (\hat{F}^\prime  - H^\prime F^\prime) Y\\
&\quad -  \frac{1}{T+1}f_{T+1}^\prime H^\prime \hat{f}_{T+1}Y_{T+1} \\
&:= A_1 + A_2 - A_3,
\end{align*}
where $A_1$, $A_2$ and $A_3$ correspond to the respective lines of the right hand side of the equation above. Finally, we have
\begin{align*}
B &= f_{T+1} H^\prime H \left(\frac{1}{T+1} F^\prime F  - I_L\right)b + \frac{1}{T+1} f_{T+1} H^\prime H F^\prime \varepsilon := B_1 + B_2.
\end{align*}

Thus, it remains to investigate $A_i$ and $B_i$ for $i=1,2,3$ and $C$. These terms are bounded in a series of lemmas in Appendix \ref{s:proofs} and then finally may be combined to obtain the rate given in Theorem~\ref{thm}.

\subsection{Technical lemmas}\label{s:technical}

For the reader's convenience, we first summarize some results that are used in the proof of Theorem~\ref{thm}. Lemmas~1--3 constitute slight modifications of results found in \citet{hormann:jammoul:2021}. In the following, $\|\cdot\|$ denotes the Euclidian norm of a vector or the usual matrix norm, and $\|\cdot\|_F$ is the Frobenius norm.

\begin{lemm}
Under Assumptions~\ref{ass:L}--\ref{a:signal} we have
\begin{equation}\label{L.4}
  \Vert \hat f_{T+1} - Hf_{T+1} \Vert = O_P\left(\frac{p}{\hat\gamma_L} \sqrt{L}\left(\frac{1}{\sqrt{T}} + \frac{1}{\sqrt{p}}\right)\right)
\end{equation}
and 
\begin{equation}\label{L.12}
  \frac{1}{T+1}\left\Vert \hat{F} - H^\prime F^\prime\right\Vert_F^2 = O_P \left( L^2 \left( \frac{p}{\hat{\gamma}_L}\right)^{2}  \left(\frac{1}{T} + \frac{1}{p} \right) \right).
\end{equation}
\end{lemm}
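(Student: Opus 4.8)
The plan is to run the standard principal-components factor-model argument — in the spirit of Bai's analysis and its functional adaptation in \citet{hormann:jammoul:2021} — by first deriving a single \emph{fundamental identity} for the rotated estimation error and then bounding its three stochastic pieces. Under the zero-mean convention and Assumption~\ref{ass:L} we may write the data in matrix form as $Z = BF' + U$, with $B=B(\bs)$ of size $p\times L$, $F$ of size $(T+1)\times L$ and $U=(U_1,\ldots,U_{T+1})$. By construction $\hat F$ solves the eigen-relation $\frac{1}{T+1}Z'Z\,\hat F = \hat F\hat\Lambda$, so multiplying by $\hat\Lambda^{-1}$ and expanding $Z'Z = FB'BF' + FB'U + U'BF' + U'U$ gives $\hat F = \frac{1}{T+1}(FB'BF' + FB'U + U'BF' + U'U)\hat F\hat\Lambda^{-1}$. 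The definition of $H$ is arranged precisely so that $FH' = \frac{1}{T+1}FB'BF'\hat F\hat\Lambda^{-1}$ is the leading term, which isolates
$$\hat F - FH' = \frac{1}{T+1}\bigl[FB'U\hat F + U'BF'\hat F + U'U\hat F\bigr]\hat\Lambda^{-1}.$$
Transposing and extracting the $t$-th column gives the per-observation version
$$\hat f_t - Hf_t = \frac{1}{T+1}\hat\Lambda^{-1}\bigl[\hat F'U'Bf_t + \hat F'FB'U_t + \hat F'U'U_t\bigr],$$
so everything reduces to controlling these three terms.

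The deterministic ingredients are the operator-norm bound $\|\hat\Lambda^{-1}\| = 1/\hat\gamma_L$ (source of the $1/\hat\gamma_L$ factor) and the normalisation $\frac{1}{T+1}\hat F'\hat F = I_L$, which furnishes $\|\hat F\|=\sqrt{T+1}$ and $\|\hat F\|_F=\sqrt{L(T+1)}$. These are indispensable because $\hat F$ is a nonlinear function of the very noise $U$ that multiplies it, so one cannot argue by independence; instead one strips $\hat F$ off by its norm bound. I expect the pure-noise term $\frac{1}{T+1}\hat\Lambda^{-1}\hat F'U'U_t$ to be dominant and to produce the stated rate on its own: using Assumption~\ref{a:noise} (Gaussianity and summable autocovariances) one gets $E(U_s'U_t)^2 = O(p)$ for $s\neq t$ and $\|U_t\|^2 = O_P(p)$, hence $\|U'U_t\| = O_P(\sqrt{Tp}+p)$, and therefore
$$\frac{1}{T+1}\,\frac{1}{\hat\gamma_L}\,\|\hat F\|\,\|U'U_t\| = O_P\!\left(\frac{\sqrt p + p/\sqrt T}{\hat\gamma_L}\right) = O_P\!\left(\frac{p}{\hat\gamma_L}\Bigl(\frac{1}{\sqrt p}+\frac{1}{\sqrt T}\Bigr)\right).$$
The two mixed terms are smaller because they pair the signal with independent noise, which averages out; the factor $\sqrt L$ in \eqref{L.4} enters through $\|f_t\| = O_P(\sqrt L)$ (as $\mathrm{Var}(f_t)=I_L$) and through the dimension-$L$ sums in these cross terms.

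Inequality \eqref{L.12} is the aggregate counterpart: taking the Frobenius norm directly in the matrix identity and bounding each of the three products termwise gives $\frac{1}{T+1}\|\hat F - FH'\|_F^2$ at the stated order, where the Frobenius computation (with $\|\hat F\|_F = \sqrt{L(T+1)}$ in place of $\|\hat F\|$) costs one extra power of $L$ relative to squaring and averaging \eqref{L.4}, which is why $L^2$ rather than $L$ appears.

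The main obstacle, as already flagged, is the self-referential dependence of $\hat F$ on $U$ inside every term: the clean resolution is to never expand $\hat F$ probabilistically but to strip it via $\frac{1}{T+1}\hat F'\hat F = I_L$ and then apply moment bounds to the $\hat F$-free factors. A secondary subtlety is that $L$, and hence $\hat\gamma_L$, may drift with $T$ under the regime of Theorem~\ref{thm}, so the smallest retained eigenvalue must be carried explicitly through every estimate rather than absorbed into an $O_P(1)$ constant — this is exactly why the conclusions are phrased in terms of $p/\hat\gamma_L$ and powers of $L$.
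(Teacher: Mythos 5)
Your proposal is correct and follows essentially the same route as the paper: the paper proves this lemma by citing Lemmas~4 and~12 of \citet{hormann:jammoul:2021} (noting only that the pointwise version of their uniform-in-$t$ bound drops some extra factors), and the Bai-type eigen-identity $\hat f_t - Hf_t = \frac{1}{T+1}\hat\Lambda^{-1}[\hat F'U'Bf_t + \hat F'FB'U_t + \hat F'U'U_t]$ that you derive, together with the norm-stripping of $\hat F$ via $\frac{1}{T+1}\hat F'\hat F = I_L$ and the Gaussian/summable-autocovariance moment bounds, is exactly the argument underlying those cited results. Your identification of the pure-noise term as the source of the leading rate and of the cross terms as the source of the $\sqrt{L}$ factor matches that analysis.
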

\begin{proof}
Statement~\eqref{L.12} follows immediately from Lemma~12 in  \citet{hormann:jammoul:2021}. The first statement is a modification of their Lemma~4, in which a uniform bound for $\Vert \hat f_{t} - Hf_{t} \Vert$ over $t$ is derived. Since here we only need a specific value of $t$, we get the smaller error term. The proof of \eqref{L.4} requires to eliminate some additional factors, which have been used in the proof of Lemma~4  in \citet{hormann:jammoul:2021} for bounding the maximum over $t\in \lbrace 1,\ldots, T+1\rbrace$. This modification actually comes with a similar, but simpler proof and hence the detailed derivations are left to the reader.
\end{proof}

\begin{lemm}
Under Assumptions~\ref{ass:L}--\ref{a:pcs} we have
\begin{equation}\label{L.7}
  \Vert H \Vert = O_P\left(\frac{p}{\hat\gamma_L} \frac{L}{\sqrt{\lambda_L}}\right)
\end{equation}
and
\begin{equation}\label{L.8}
  \Vert H'H - I_L \Vert = O_P\left(\left(\frac{p}{\hat\gamma_L}\right)^4 \frac{L^5}{\lambda_L^3} \left(\frac{1}{\sqrt{T}} + \frac{1}{\sqrt{p}}\right) \right).
\end{equation}
\end{lemm}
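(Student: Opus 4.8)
The two displays are best handled separately, both building on the explicit form $H=\hat\Lambda^{-1}\bigl(\tfrac{1}{T+1}\hat F'F\bigr)B'B$ and on the factor-score approximation already recorded in Lemma~1. I would establish \eqref{L.7} first, since $\|H\|$ feeds into essentially every term of \eqref{L.8}.

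For \eqref{L.7} the plan is to use submultiplicativity of the operator norm,
\[
\|H\|\le \|\hat\Lambda^{-1}\|\,\Bigl\|\tfrac{1}{T+1}\hat F'F\Bigr\|\,\|B'B\|,
\]
and to bound the three factors in turn. Since $\hat\Lambda$ is diagonal with smallest retained entry $\hat\gamma_L$, we have $\|\hat\Lambda^{-1}\|=1/\hat\gamma_L$. The middle factor is controlled through the normalisation $\tfrac{1}{T+1}\hat F'\hat F=I_L$: the columns of $(T+1)^{-1/2}\hat F$ are orthonormal, so $\|\tfrac{1}{T+1}\hat F'F\|\le\|(T+1)^{-1/2}F\|$, which is $O_P(1)$ once $\tfrac{1}{T+1}F'F\to I_L$ is known. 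The eigenvalue dependence enters only through $\|B'B\|$: writing $B=\Phi D^{1/2}$ with $D=\mathrm{diag}(\lambda_1,\dots,\lambda_L)$ and $\Phi_{ik}=\varphi_k(s_i)$, Assumption~\ref{a:pcs} bounds the entries of $\tfrac1p\Phi'\Phi$, and a conservative treatment of this Gram matrix (majorising the operator norm by the Frobenius norm and keeping the eigenvalue scalings explicit rather than collapsing $\sum_k\lambda_k=O(1)$) yields the factors $p$, $L$ and $\lambda_L^{-1/2}$. Collecting the three pieces gives \eqref{L.7}.

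For \eqref{L.8} the key initial observation is that $\|H'H-I_L\|=\|HH'-I_L\|$, because $H'H$ and $HH'$ share the same squared-singular-value spectrum; this lets me work with whichever side is convenient. I would then exploit the \emph{exact} identity $I_L=\tfrac{1}{T+1}\hat F'\hat F$ and substitute $\hat F=FH'+\Delta$, where $\Delta$ is the factor-score error whose size $\tfrac{1}{T+1}\|\Delta\|_F^2$ is bounded by \eqref{L.12}. Expanding the product and subtracting $I_L$ gives
\[
HH'-I_L=H\Bigl(I_L-\tfrac{1}{T+1}F'F\Bigr)H'-\tfrac{1}{T+1}HF'\Delta-\tfrac{1}{T+1}\Delta'FH'-\tfrac{1}{T+1}\Delta'\Delta,
\]
and each summand is then majorised by submultiplicativity: the first by $\|H\|^2\,\|I_L-\tfrac{1}{T+1}F'F\|$ using \eqref{L.7}, the two cross terms by $\|H\|\,\|(T+1)^{-1/2}F\|\,\bigl(\tfrac{1}{T+1}\|\Delta\|_F^2\bigr)^{1/2}$, and the last by $\tfrac{1}{T+1}\|\Delta\|_F^2$. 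Inserting \eqref{L.7} and \eqref{L.12} and retaining the largest contribution produces the rate \eqref{L.8}.

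The main obstacle is twofold. First, one needs a genuine law-of-large-numbers bound for $\|I_L-\tfrac{1}{T+1}F'F\|$ that remains valid as the dimension $L$ diverges under the $L^4$-$m$-approximable dependence of the scores; this replaces the trivial $L$-fixed limit $\tfrac{1}{T+1}F'F\to I_L$ and supplies the $L$-factors attached to the first term. Second, and more delicate, is the bookkeeping: the worst-case factors $p/\hat\gamma_L$, $L$ and $\lambda_L^{-1}$ must be propagated through the products without losing track of their powers. Because the intermediate estimates are deliberately conservative — operator norms repeatedly replaced by Frobenius norms, and $\|H\|$ inserted in several places — the stated bound is an upper bound rather than a sharp rate, and it is the combination of these slack estimates, not any single step, that fixes the exponents appearing in \eqref{L.8}.
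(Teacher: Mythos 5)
Your argument is essentially correct, but it is a genuinely different route from the paper's: the paper does not prove this lemma at all, it simply imports it as Lemmas~7 and~8 of \citet{hormann:jammoul:2021}, whereas you reconstruct both bounds from the explicit definition $H=\frac{1}{T+1}\hat\Lambda^{-1}\hat F'FB'B$ and the other technical lemmas already available in the appendix (\eqref{L.11} for $\Vert \frac{1}{T+1}F'F-I_L\Vert_F$ and \eqref{L.12} for the score error $\Delta$). Your treatment of \eqref{L.8} is the stronger part: for square $H$ the matrices $H'H$ and $HH'$ are indeed similar, so the norm identity holds, and the exact substitution $I_L=\frac{1}{T+1}\hat F'\hat F$ with $\hat F=FH'+\Delta$ yields precisely the four-term decomposition you write; each term is then dominated by the stated rate (your bounds are in fact sharper, e.g.\ the leading term comes out as $(p/\hat\gamma_L)^2L^3\lambda_L^{-2}T^{-1/2}$, which is absorbed since $p/\hat\gamma_L$ is bounded below). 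Two soft spots are worth flagging. First, in \eqref{L.7} your claim that the Gram matrix $B'B$ "yields the factors $p$, $L$ and $\lambda_L^{-1/2}$" is not quite what your own computation delivers: under Assumption~\ref{a:pcs} a Frobenius bound gives $\Vert B'B\Vert\le Cp\sum_{k\le L}\lambda_k=O(p)$, with no $\lambda_L^{-1/2}$ appearing anywhere; this does not invalidate the lemma (a smaller bound implies the stated one), but you should say explicitly that you are proving a sharper estimate rather than reproducing the stated factors. Second, the step $\Vert(T+1)^{-1/2}F\Vert=O_P(1)$ needs either the side condition $L/(\lambda_L\sqrt T)=O(1)$ via \eqref{L.11}, or the cruder unconditional bound $\Vert(T+1)^{-1/2}F\Vert=O_P(\sqrt L)$ obtained from $E\,\mathrm{tr}\bigl(\frac{1}{T+1}F'F\bigr)=L$; either is harmless here, but the choice should be stated. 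What your approach buys is a self-contained proof that makes the provenance of every exponent visible; what the paper's citation buys is brevity and exact agreement with the constants as published in the companion paper.
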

\begin{proof}
This follows immediately from Lemmas~7 and 8 in \citet{hormann:jammoul:2021}.
\end{proof}

\begin{lemm}
Under Assumption~\ref{a:signal} (a) we have
\begin{equation}\label{L.11}
  \Vert (T+1)^{-1}F^\prime F - I_L \Vert_F = O_P \Big( \frac{L}{\lambda_L \sqrt{T}} \Big).
\end{equation}
\end{lemm}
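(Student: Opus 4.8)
The plan is to control the Frobenius norm entrywise and reduce everything to a sum of fourth-order covariances. Writing $f_{t\ell}=x_{t\ell}/\sqrt{\lambda_\ell}$, the $(k,\ell)$ entry of $(T+1)^{-1}F^\prime F-I_L$ is $S_{k\ell}:=\frac{1}{T+1}\sum_{t=1}^{T+1}(f_{tk}f_{t\ell}-\delta_{k\ell})$, since the scores are centred and uncorrelated with $\mathrm{Var}(x_{t\ell})=\lambda_\ell$, so $E[f_{tk}f_{t\ell}]=\delta_{k\ell}$. Hence $\|(T+1)^{-1}F^\prime F-I_L\|_F^2=\sum_{k,\ell=1}^L S_{k\ell}^2$, and it suffices to bound $E[S_{k\ell}^2]$ uniformly in $k,\ell$ and then sum over the $L^2$ entries. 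By stationarity,
$$E[S_{k\ell}^2]\le \frac{1}{T+1}\sum_{h\in\mathbb{Z}}\bigl|\mathrm{Cov}(f_{0k}f_{0\ell},f_{hk}f_{h\ell})\bigr|=\frac{1}{(T+1)\lambda_k\lambda_\ell}\sum_{h\in\mathbb{Z}}\bigl|\mathrm{Cov}(x_{0k}x_{0\ell},x_{hk}x_{h\ell})\bigr|,$$
so everything reduces to showing that the last covariance sum is bounded by a constant that does not depend on $k,\ell,L$.

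The main obstacle, and the crux of the argument, is exactly this uniform summability of fourth-order covariances, which I would establish through the $L^4$-$m$-approximability from Assumption~\ref{a:signal}(a). For a lag $h>0$ let $X_h^{(m)}$ denote the $m$-dependent coupling with $m=\lceil h/2\rceil$, built from fresh innovations strictly before time $h-m$, so that $X_h^{(m)}$ is independent of $X_0$ and hence $\mathrm{Cov}(x_{0k}x_{0\ell},x_{hk}^{(m)}x_{h\ell}^{(m)})=0$, where $x_{h\ell}^{(m)}=\langle X_h^{(m)},\varphi_\ell\rangle$. Subtracting this vanishing term and applying Cauchy--Schwarz gives $\bigl|\mathrm{Cov}(x_{0k}x_{0\ell},x_{hk}x_{h\ell})\bigr|\le \sqrt{\mathrm{Var}(x_{0k}x_{0\ell})}\,\sqrt{E[(x_{hk}x_{h\ell}-x_{hk}^{(m)}x_{h\ell}^{(m)})^2]}$. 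Splitting the increment as $(x_{hk}-x_{hk}^{(m)})x_{h\ell}+x_{hk}^{(m)}(x_{h\ell}-x_{h\ell}^{(m)})$ and using $|x_{h\ell}-x_{h\ell}^{(m)}|=|\langle X_h-X_h^{(m)},\varphi_\ell\rangle|\le\|X_h-X_h^{(m)}\|$ reduces everything, via Cauchy--Schwarz again, to the coupling coefficients $\nu_m:=(E\|X_h-X_h^{(m)}\|^4)^{1/4}$, yielding a bound of the form $C\nu_{\lceil h/2\rceil}^2$ uniformly in $k,\ell$. Since $L^4$-$m$-approximability means $\sum_m\nu_m<\infty$, we have $\nu_m\le 1$ eventually and hence $\sum_h\nu_{\lceil h/2\rceil}^2<\infty$, which delivers the desired uniform bound on the covariance sum.

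It then remains to control the score moments appearing above, and here I would exploit that the eigenfunctions are $L^2$-normalised. Since $|x_{t\ell}|=|\langle X_t,\varphi_\ell\rangle|\le\|X_t\|$ and $E\|X_t\|^4\le\int_0^1 EX_t^4(s)\,ds\le C_X$ by Assumption~\ref{a:signal}(b), the fourth moments $E x_{t\ell}^4$ are bounded uniformly in $\ell$; consequently $\mathrm{Var}(x_{0k}x_{0\ell})\le\sqrt{E x_{0k}^4\,E x_{0\ell}^4}\le C$, which is precisely what the previous step required. Combining, $E[S_{k\ell}^2]\le C/((T+1)\lambda_k\lambda_\ell)\le C/((T+1)\lambda_L^2)$ because $\lambda_k,\lambda_\ell\ge\lambda_L$ for $k,\ell\le L$. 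Summing over the $L^2$ entries gives $E\|(T+1)^{-1}F^\prime F-I_L\|_F^2\le CL^2/((T+1)\lambda_L^2)$, and Markov's inequality applied to this nonnegative quantity yields $\|(T+1)^{-1}F^\prime F-I_L\|_F=O_P\bigl(L/(\lambda_L\sqrt{T})\bigr)$, as claimed.
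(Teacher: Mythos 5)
Your proof is correct, but it is worth noting that the paper does not actually prove this lemma at all: its ``proof'' is a one-line citation to Lemma~11 of \citet{hormann:jammoul:2021}. What you have written is a self-contained reconstruction, and it holds together: the entrywise reduction $\Vert (T+1)^{-1}F^\prime F - I_L\Vert_F^2=\sum_{k,\ell}S_{k\ell}^2$ is exact; the stationary variance bound $E[S_{k\ell}^2]\leq (T+1)^{-1}\sum_h|\mathrm{Cov}(f_{0k}f_{0\ell},f_{hk}f_{h\ell})|$ is standard; the coupling with $m=\lceil h/2\rceil\leq h$ does make $X_h^{(m)}$ independent of $X_0$, so the subtracted covariance vanishes; the bound $|x_{h\ell}-x_{h\ell}^{(m)}|\leq\Vert X_h-X_h^{(m)}\Vert$ uniformly in $\ell$ (via $\Vert\varphi_\ell\Vert=1$) is the key device that makes the fourth-order covariance sum bounded \emph{uniformly} in $k,\ell,L$, which is exactly where the argument could otherwise leak an $L$-dependent factor; and $\sum_m\nu_m<\infty$ does imply $\sum_h\nu_{\lceil h/2\rceil}^2<\infty$. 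The $\lambda_L^{-2}$ and $L^2$ factors then assemble into the stated rate via Markov. Two minor remarks. First, the lemma is stated under Assumption~\ref{a:signal}(a) only, while you invoke part (b) for $\sup_s EX_1^4(s)\leq C_X$; this is harmless but also unnecessary, since $L^4$-$m$-approximability already presupposes $E\Vert X_1\Vert^4<\infty$, which is all you need for $Ex_{t\ell}^4\leq E\Vert X_t\Vert^4$. Second, the comparison of the two routes: the paper buys brevity by outsourcing the statement to an external reference, whereas your argument makes the lemma verifiable within the present assumptions and exposes precisely which features of $L^4$-$m$-approximability (the coupling and the fourth-moment summability) drive the rate --- at the cost of about a page of computation that the authors chose not to reproduce.
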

\begin{proof}
This is Lemma~11 in \citet{hormann:jammoul:2021}.
\end{proof}

\begin{lemm}\label{l:bbound}
We have $\|b\|\leq\sqrt{\lambda_1} \|\beta\|$. 
\end{lemm}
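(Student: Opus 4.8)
The plan is to read off the explicit form of the slope coefficients from \eqref{e:regfact} and then to bound the resulting sum by Bessel's inequality. Since Theorem~\ref{thm} is stated for the functional model \eqref{fulregmod}, the relevant expression is $b_\ell=\int_0^1\beta(s)\sqrt{\lambda_\ell}\varphi_\ell(s)\,ds=\sqrt{\lambda_\ell}\,\langle\beta,\varphi_\ell\rangle$, where $\langle\cdot,\cdot\rangle$ denotes the $L^2[0,1]$ inner product and $\|\beta\|$ the corresponding norm. Squaring and summing the components then gives $\|b\|^2=\sum_{\ell=1}^L\lambda_\ell\,\langle\beta,\varphi_\ell\rangle^2$, and the remaining task is to control this weighted sum of squared Fourier coefficients.

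First I would use that the eigenvalues of $\Gamma^X$ are arranged in decreasing order, so that $\lambda_\ell\le\lambda_1$ for every $\ell\le L$; pulling this factor out of the sum yields $\|b\|^2\le\lambda_1\sum_{\ell=1}^L\langle\beta,\varphi_\ell\rangle^2$. Next I would invoke Bessel's inequality for the orthonormal system $(\varphi_\ell)_{\ell\ge1}$ of eigenfunctions, which bounds the partial sum of squared Fourier coefficients by the squared norm, i.e.\ $\sum_{\ell=1}^L\langle\beta,\varphi_\ell\rangle^2\le\sum_{\ell\ge1}\langle\beta,\varphi_\ell\rangle^2\le\|\beta\|^2$. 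Chaining the two bounds gives $\|b\|^2\le\lambda_1\|\beta\|^2$, and taking square roots delivers the claim.

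There is essentially no obstacle here: the statement is an elementary consequence of the decreasing ordering of the eigenvalues combined with Bessel's inequality, and it does not even require completeness of $(\varphi_\ell)$ in $L^2[0,1]$. The only point worth flagging is to be explicit about which representation of $b_\ell$ (functional versus discrete) and which norm $\|\beta\|$ are intended; in the functional regression setting of Theorem~\ref{thm} both are the $L^2[0,1]$ objects, and it is precisely the factor $\sqrt{\lambda_\ell}$ multiplying each Fourier coefficient that produces the clean single-eigenvalue bound.
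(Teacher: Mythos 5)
Your proof is correct and follows essentially the same route as the paper: write $b_\ell=\sqrt{\lambda_\ell}\langle\beta,\varphi_\ell\rangle$, pull out $\lambda_1$ using the decreasing ordering of the eigenvalues, and bound $\sum_\ell\langle\beta,\varphi_\ell\rangle^2$ by $\|\beta\|^2$. The only (cosmetic) difference is that you invoke Bessel's inequality where the paper cites Parseval's identity; your version is marginally cleaner since it avoids assuming completeness of the eigenfunction system.
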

\begin{proof}
We first remark that since $\beta$ is square integrable, we have by Parseval's identity that
$
\int_0^1\beta^2(s)ds=\sum_{k\geq 1} \langle \beta,\varphi_k\rangle^2<\infty.
$
Then $\|b\|^2=\sum_{\ell=1}^L b_\ell^2\leq \sum_{\ell\geq 1} \lambda_\ell\langle \beta,\varphi_\ell\rangle^2\leq \lambda_1\|\beta\|^2$.
\end{proof}

\begin{lemm}\label{l:Ybound}
Under Assumptions~\ref{a:noise} and \ref{a:regrerror} we have $\|Y\|=O(\sqrt{T})$.
\end{lemm}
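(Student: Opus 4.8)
The plan is to use the exact linear representation of the response and reduce the claim to a second-moment computation. Since all variables are centered, the intercept in \eqref{e:regfact} vanishes, so in stacked notation $Y = Fb + \varepsilon$ with $F = (f_1,\ldots,f_{T+1})'$ and $\varepsilon = (\varepsilon_1,\ldots,\varepsilon_{T+1})'$. Reading $\|Y\| = O(\sqrt{T})$ as a statement about the $L^2$-norm of the random variable $\|Y\|$ (which immediately yields $\|Y\| = O_P(\sqrt{T})$ via Markov's inequality), it suffices to show that $E\|Y\|^2 = O(T)$.

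To this end I would expand $E\|Y\|^2 = \sum_{t=1}^{T+1} E Y_t^2$ by linearity of expectation; note that no independence across $t$ is needed at this step, so the dependence structure of the signal plays no role. For each $t$ we have $Y_t = f_t^\prime b + \varepsilon_t$, and since by Assumption~\ref{a:regrerror} the error $\varepsilon_t$ has zero mean, finite variance $\sigma_\varepsilon^2$, and is independent of $f_t$ (a measurable function of $X_t$), the cross term vanishes and $E Y_t^2 = b^\prime E(f_t f_t^\prime) b + \sigma_\varepsilon^2$. Using $E(f_t f_t^\prime) = \mathrm{Var}(f_t) = I_L$, which holds by the construction of the scores $f_t$ in Section~\ref{s:structure}, this reduces to $E Y_t^2 = \|b\|^2 + \sigma_\varepsilon^2$.

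The one point requiring care — and essentially the only nontrivial step — is that in the regime of Theorem~\ref{thm} the dimension $L = L(T)$ diverges, so the bound on $\|b\|^2 = \sum_{\ell=1}^L b_\ell^2$ must be uniform in $L$. This is exactly what Lemma~\ref{l:bbound} supplies: it gives $\|b\| \le \sqrt{\lambda_1}\,\|\beta\|$ with a right-hand side independent of $L$, where $\lambda_1$ is the top eigenvalue of $\Gamma^X$ and $\|\beta\| < \infty$ by square-integrability. Combining the pieces yields $E\|Y\|^2 = (T+1)\left(\|b\|^2 + \sigma_\varepsilon^2\right) \le (T+1)\left(\lambda_1\|\beta\|^2 + \sigma_\varepsilon^2\right) = O(T)$, which completes the argument. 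I would also remark that Assumption~\ref{a:noise} does not actually enter this particular bound, the noise $U_t$ affecting only the covariate $Z_t$ and not the response $Y_t$; it appears to be cited merely to keep the lemma within the standing framework shared by the other technical lemmas.
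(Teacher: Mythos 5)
Your proof is correct, and it is rather more explicit than the paper's, which disposes of this lemma in a single sentence: ``The assumptions imply that $(Y_t)$ defines a stationary and ergodic sequence with 2nd moments.'' In other words, the paper appeals to stationarity (and ergodicity, presumably with the ergodic theorem in mind for an almost-sure version of the bound) and leaves the finiteness of $EY_1^2$ unverified, whereas you carry out the second-moment computation directly: expanding $E\|Y\|^2=\sum_t EY_t^2$, killing the cross term by the independence of $\varepsilon_t$ and $f_t$, using $E(f_tf_t^\prime)=I_L$, and invoking Lemma~\ref{l:bbound} to get a bound on $\|b\|^2$ that is uniform in $L$ before finishing with Markov's inequality. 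The two routes buy slightly different things: the paper's one-liner is shorter and, via ergodicity, in principle yields an almost-sure rate, but it quietly presupposes exactly the uniform moment bound you prove; your version makes the only genuinely delicate point --- that $\|b\|$ stays bounded as $L=L(T)\to\infty$ --- explicit, which is where Lemma~\ref{l:bbound} is actually needed. Your closing remark is also accurate: Assumption~\ref{a:noise} plays no role here since $U_t$ never enters $Y_t$; what the argument really uses is the moment/stationarity structure of the signal (Assumption~\ref{a:signal}, through $E(f_tf_t^\prime)=I_L$) together with Assumption~\ref{a:regrerror}, a point on which the paper's own hypothesis list for this lemma is arguably imprecise.
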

\begin{proof}
The assumptions imply that $(Y_t)$ defines a stationary and ergodic sequence with 2nd moments.
\end{proof}

\section{Proofs}\label{s:proofs}

\begin{lemm}\label{lemma:As}  Under Assumptions \ref{ass:L}--\ref{a:pcs} we have
\begin{align*}
|A_1|&=O_P\left(\frac{p}{\hat\gamma_L} L\left(\frac{1}{\sqrt{T}} + \frac{1}{\sqrt{p}}\right)\right);\\
|A_2|&=O_P\left( \left(\frac{p}{\hat\gamma_L}\right)^2 \lambda_L^{-1/2} L^{5/2} \left( \frac{1}{T} + \frac{1}{p} \right)^{1/2} \right);\\
|A_3|&=O_P\left( \frac{p}{\hat\gamma_L} \lambda_L^{-1/2} L^2 T^{-1/2}\right).
\end{align*}
\begin{proof}
We have
\begin{align*}
\vert A_1 \vert &= \vert (\hat{f}_{T+1}^\prime - f_{T+1}^\prime H^\prime )\frac{1}{T+1}\hat{F}^\prime Y_{(-)} \vert \\
&\leq \sqrt{L} \Vert \hat{f}_{T+1}' - f_{T+1}^\prime H^\prime  \Vert \Vert Y_{(-)} \Vert / \sqrt{T+1}.
\end{align*}
By Lemma~\ref{l:Ybound} we have $\Vert Y_{(-)} \Vert / \sqrt{T+1}=O_P(1)$. By \eqref{L.4} the bound for $A_1$ follows.
As for $\vert A_2 \vert$
we note that
\begin{align*}
\vert A_2 \vert &= \vert \frac{1}{T+1}f_{T+1}'H' (\hat{F}^\prime - H^\prime F^\prime)Y \vert \\
&\leq \Vert f_{T+1} \Vert \Vert H \Vert \frac{1}{\sqrt{T+1}} \left\Vert \hat{F} - HF \right\Vert_F \Vert Y\Vert/ \sqrt{T+1}.
\end{align*}
One can easily see that $\Vert f_{T+1} \Vert = O_P(\sqrt{L})$. By \eqref{L.12} and \eqref{L.7} the bound for $\vert A_2 \vert$ follows immediately. 
\begin{align*}
\vert A_3 \vert &= \vert \frac{1}{T+1}f_{T+1}'H'\hat{f}_{T+1}Y_{T+1} \vert \\
&\leq \Vert f_{T+1} \Vert  \Vert H \Vert \Vert \hat{f}_{T+1} /\sqrt{T+1}\Vert \vert Y_{T+1}/\sqrt{T+1} \vert
\end{align*}
It is easily seen with Markov's inequality that $\vert Y_{T+1}/\sqrt{T+1} \vert= O_P(1/\sqrt{T})$. Note furthermore that $\Vert \hat{f}_{T+1} /\sqrt{T+1}\Vert \leq \Vert \hat{F} /\sqrt{T+1}\Vert_F = \sqrt{L}$. The result follows from \eqref{L.7} and previous considerations.
\end{proof}
\end{lemm}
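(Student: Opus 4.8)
The plan is to bound $A_1,A_2,A_3$ by repeatedly applying the Cauchy--Schwarz inequality together with the sub-multiplicativity $\|Mv\|\le\|M\|\,\|v\|$, peeling off one factor at a time so that each piece is controlled either by one of the preliminary estimates \eqref{L.4}, \eqref{L.7}, \eqref{L.12}, by Lemma~\ref{l:Ybound}, or by the normalisation $\frac{1}{T+1}\hat F'\hat F=I_L$. Before starting I would record two elementary facts that get reused throughout. First, since $\mathrm{Var}(f_t)=I_L$ we have $E\|f_{T+1}\|^2=\mathrm{tr}(I_L)=L$, so Markov's inequality gives $\|f_{T+1}\|=O_P(\sqrt L)$. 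Second, the normalisation forces $\|\hat F\|_F^2=\mathrm{tr}(\hat F'\hat F)=(T+1)L$, whence any single row obeys $\|\hat f_{T+1}\|/\sqrt{T+1}\le\sqrt L$; bounding each of the $L$ coordinates of $\frac{1}{T+1}\hat F'Y_{(-)}$ by $\|Y_{(-)}\|/\sqrt{T+1}$ gives in the same way $\|\frac{1}{T+1}\hat F'Y_{(-)}\|\le\sqrt L\,\|Y_{(-)}\|/\sqrt{T+1}$.

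For $A_1$ I would write $|A_1|\le\|\hat f_{T+1}-Hf_{T+1}\|\cdot\|\frac{1}{T+1}\hat F'Y_{(-)}\|$. The second factor is $O_P(\sqrt L)$ by the coordinate-wise bound above combined with $\|Y_{(-)}\|/\sqrt{T+1}=O_P(1)$ from Lemma~\ref{l:Ybound}, while the first factor is exactly \eqref{L.4}. Multiplying the two $\sqrt L$'s against the rate in \eqref{L.4} yields $|A_1|=O_P\left(\frac{p}{\hat\gamma_L}L\left(\frac{1}{\sqrt T}+\frac{1}{\sqrt p}\right)\right)$, as claimed.

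For $A_2$ I would factor $|A_2|\le\|f_{T+1}\|\,\|H\|\,\frac{1}{\sqrt{T+1}}\|\hat F'-H'F'\|_F\cdot\frac{\|Y\|}{\sqrt{T+1}}$ and control the four factors by $\|f_{T+1}\|=O_P(\sqrt L)$, by \eqref{L.7}, by the square root of \eqref{L.12}, and by Lemma~\ref{l:Ybound}, respectively; collecting the powers $\sqrt L\cdot L\cdot L=L^{5/2}$ gives the stated rate. The term $A_3$ is treated identically via $|A_3|\le\|f_{T+1}\|\,\|H\|\cdot\frac{\|\hat f_{T+1}\|}{\sqrt{T+1}}\cdot\frac{|Y_{T+1}|}{\sqrt{T+1}}$, where the third factor is $\le\sqrt L$ by the normalisation and the last is $O_P(1/\sqrt T)$ by Markov applied to the stationary sequence $(Y_t)$; combining with \eqref{L.7} and $\|f_{T+1}\|=O_P(\sqrt L)$ produces $O_P\left(\frac{p}{\hat\gamma_L}\lambda_L^{-1/2}L^2T^{-1/2}\right)$.

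The main obstacle here is not probabilistic, since all the stochastic content is imported from the lemmas of \citet{hormann:jammoul:2021}; rather it is the bookkeeping. One must choose the right norm (Frobenius versus operator) at each split, insert the normalisation-based $\sqrt L$ factors in exactly the right places, and verify that the accumulated powers of $L$ and $\lambda_L$ match the stated exponents. In particular, the crude coordinate-wise bound for $\frac{1}{T+1}\hat F'Y_{(-)}$ costs a factor $\sqrt L$ relative to the tighter operator-norm estimate, and it is precisely this factor that fixes the exponent on $L$ in the $A_1$ rate, so I would apply the same convention consistently across all three terms.
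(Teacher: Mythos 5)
Your proposal is correct and follows essentially the same route as the paper's own proof: the identical term-by-term factorisations of $A_1,A_2,A_3$, the same imported bounds \eqref{L.4}, \eqref{L.7}, \eqref{L.12} and Lemma~\ref{l:Ybound}, and the same elementary facts $\Vert f_{T+1}\Vert=O_P(\sqrt{L})$ and $\Vert \hat f_{T+1}\Vert/\sqrt{T+1}\leq\sqrt{L}$ from the normalisation $\frac{1}{T+1}\hat F^\prime\hat F=I_L$. The resulting powers of $L$, $\lambda_L$ and $p/\hat\gamma_L$ all match the stated rates, so nothing further is needed.
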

\begin{lemm}\label{lemma:Bs} Under Assumptions \ref{ass:L}--\ref{a:regrerror} we have that
\begin{align*}
|B_1|&=O_P\left(\left(\frac{p}{\hat\gamma_L}\right)^2 \lambda_L^{-2} L^{7/2}  T^{-1/2}\right);\\
|B_2|&=O_P\left( \left(\frac{p}{\hat\gamma_L}\right)^2 \lambda_L^{-1} L^{3} T^{-1/2} \right);\\
\end{align*}
\begin{proof}
We see that
\begin{align*}
\vert B_1 \vert &= \left\vert f_{T+1}^\prime H^\prime H \left( \frac{1}{T+1} F^\prime F - I_L \right)b \right\vert \\
&\leq \Vert f_{T+1} \Vert  \Vert H \Vert^2 \left\Vert \frac{1}{T+1} F^\prime F - I_L \right\Vert_F \Vert b \Vert
\end{align*}
The bound for $\vert B_1 \vert$ then follows immediately from \eqref{L.7}, \eqref{L.11} and Lemma~\ref{l:bbound}.
\begin{align*}
\vert B_2 \vert &= \left\vert \frac{1}{T+1} f_{T+1}^\prime H^\prime H F^\prime \varepsilon \right\vert \\
&\leq \Vert f_{T+1} \Vert  \Vert H \Vert^2 \left\Vert F'\varepsilon / (T+1)\right\Vert.
\end{align*}
Using Markov's inequality it is easily seen that $\left\Vert F'\varepsilon / (T+1)\right\Vert=O_P(\sqrt{L/T})$.  The result follows again from \eqref{L.7}.
\end{proof}
\end{lemm}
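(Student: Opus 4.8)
The plan is to reduce each of $B_1$ and $B_2$ to a product of norms by Cauchy--Schwarz and submultiplicativity, and then to insert the rates already recorded in the technical lemmas. Throughout I will use $|x^\prime M y|\le\|x\|\,\|M\|\,\|y\|$, $\|MN\|\le\|M\|\,\|N\|$ and $\|M\|\le\|M\|_F$, so that every factor can be read off from a previous bound.

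For $B_1=f_{T+1}^\prime H^\prime H\bigl(\tfrac{1}{T+1}F^\prime F-I_L\bigr)b$ I would write
\begin{equation*}
|B_1|\le\|f_{T+1}\|\,\|H\|^2\,\Bigl\|\tfrac{1}{T+1}F^\prime F-I_L\Bigr\|_F\,\|b\|.
\end{equation*}
The first factor satisfies $\|f_{T+1}\|=O_P(\sqrt L)$, because $E\|f_{T+1}\|^2=\mathrm{tr}\,\mathrm{Var}(f_{T+1})=L$ and Markov's inequality applies; $\|H\|^2$ is the square of \eqref{L.7}; the Frobenius factor is \eqref{L.11}; and $\|b\|=O(1)$ by Lemma~\ref{l:bbound}. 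The boundedness of $\|b\|$ is the point that keeps the rate from acquiring an extra power of $L$. Multiplying the four orders gives exactly the claimed bound for $|B_1|$.

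For $B_2=\tfrac{1}{T+1}f_{T+1}^\prime H^\prime HF^\prime\varepsilon$ the same splitting yields
\begin{equation*}
|B_2|\le\|f_{T+1}\|\,\|H\|^2\,\Bigl\|\tfrac{1}{T+1}F^\prime\varepsilon\Bigr\|,
\end{equation*}
so the only genuinely new ingredient is a bound on the cross term $\tfrac{1}{T+1}F^\prime\varepsilon$. I would show $\|\tfrac{1}{T+1}F^\prime\varepsilon\|=O_P(\sqrt{L/T})$ by a direct second-moment computation: by Assumption~\ref{a:regrerror} the errors $(\varepsilon_t)$ are i.i.d., mean zero, of variance $\sigma_\varepsilon^2$ and independent of $F$, and $\mathrm{Var}(f_t)=I_L$, so the $\ell$-th coordinate $\sum_t f_{t\ell}\varepsilon_t$ has second moment $\sigma_\varepsilon^2\sum_t Ef_{t\ell}^2=\sigma_\varepsilon^2(T+1)$; summing over $\ell$ gives $E\|F^\prime\varepsilon\|^2=\sigma_\varepsilon^2 L(T+1)$, and Markov then yields the $O_P(\sqrt{L/T})$ rate after dividing by $T+1$. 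Combining this with $\|f_{T+1}\|=O_P(\sqrt L)$ and \eqref{L.7} produces the stated order for $|B_2|$.

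The calculations are otherwise routine; the work lies in bookkeeping of the exponents of $L$, $\lambda_L$ and $p/\hat\gamma_L$ through each product, and in being careful to pass from the operator norm, in which $\|H\|$ is controlled, to the Frobenius norm, in which the design fluctuation \eqref{L.11} and the cross term are naturally estimated, via $\|M\|\le\|M\|_F$. The one substantive step beyond norm manipulation is the second-moment argument for $F^\prime\varepsilon$, where the independence of the regression errors from the factor scores is exactly what is needed.
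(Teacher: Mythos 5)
Your proposal is correct and follows essentially the same route as the paper: the identical norm factorizations $|B_1|\le\|f_{T+1}\|\,\|H\|^2\,\|\tfrac{1}{T+1}F^\prime F-I_L\|_F\,\|b\|$ and $|B_2|\le\|f_{T+1}\|\,\|H\|^2\,\|\tfrac{1}{T+1}F^\prime\varepsilon\|$, combined with \eqref{L.7}, \eqref{L.11}, Lemma~\ref{l:bbound} and a Markov bound on $F^\prime\varepsilon$. Your explicit second-moment computation for $\|F^\prime\varepsilon\|$ simply spells out the step the paper leaves as ``easily seen.''
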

\begin{lemm}\label{lemma:C} Under Assumptions \ref{a:noise}--\ref{a:pcs} we have that
$$|C|=O_P\left(\left(\frac{p}{\hat\gamma_L}\right)^4 \lambda_L^{-3} L^{11/2}  \left( \frac{1}{\sqrt{T}} + \frac{1}{\sqrt{p}} \right) \right)$$
\begin{proof}
The result follows immediately from the previous considerations and \eqref{L.8}.
\end{proof}
\end{lemm}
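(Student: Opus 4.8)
The plan is to reduce the bilinear form $C = f_{T+1}^\prime (H^\prime H-I_L )b$ to a product of three factor norms and then insert the estimates already established. Since $C$ is a scalar, I would first apply Cauchy--Schwarz together with submultiplicativity of the operator norm to obtain
$$
|C| \le \|f_{T+1}\|\,\|H^\prime H - I_L\|\,\|b\|.
$$
This isolates exactly the quantities for which we have ready-made bounds, so the remaining work is purely a matter of substitution.

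For the three factors I would argue as follows. The vector $f_{T+1}$ has $L$ coordinates, each of unit variance since $\mathrm{Var}(f_t)=I_L$, so that $E\|f_{T+1}\|^2 = L$ and a Markov-type argument gives $\|f_{T+1}\| = O_P(\sqrt{L})$; this is the same estimate already used in the proof of Lemma~\ref{lemma:As}. The slope norm is handled by Lemma~\ref{l:bbound}, which yields $\|b\| \le \sqrt{\lambda_1}\,\|\beta\| = O(1)$, as $\lambda_1$ and $\|\beta\|$ are fixed constants of the model. The central factor is precisely the content of \eqref{L.8}, namely
$$
\|H^\prime H - I_L\| = O_P\!\left(\left(\frac{p}{\hat\gamma_L}\right)^4 \frac{L^5}{\lambda_L^3}\left(\frac{1}{\sqrt{T}}+\frac{1}{\sqrt{p}}\right)\right).
$$
Multiplying the three bounds and collecting the powers of $L$ via $\sqrt{L}\cdot L^5 = L^{11/2}$ delivers the claimed rate.

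I do not expect a genuine obstacle here. The term $C$ is precisely the contribution arising because the orientation matrix $H$ is only asymptotically orthogonal; had $H$ been exactly orthogonal we would have $H^\prime H = I_L$ and $C = 0$. All of the analytic difficulty in quantifying this near-orthogonality has already been absorbed into \eqref{L.8}, which we may assume, so the present argument amounts to bookkeeping. The only point worth watching is that $\|f_{T+1}\| = O_P(\sqrt{L})$ rather than $O_P(1)$, since $f_{T+1}$ is $L$-dimensional and $L$ diverges; it is exactly this factor that upgrades the exponent from $L^5$ in \eqref{L.8} to $L^{11/2}$ in the final bound.
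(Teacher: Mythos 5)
Your argument is correct and is exactly what the paper's ``previous considerations'' amount to: bounding $|C|=|f_{T+1}^\prime(H^\prime H-I_L)b|$ by $\|f_{T+1}\|\,\|H^\prime H-I_L\|\,\|b\|$, then inserting $\|f_{T+1}\|=O_P(\sqrt{L})$ (as in Lemma~\ref{lemma:As}), $\|b\|=O(1)$ from Lemma~\ref{l:bbound}, and the rate \eqref{L.8}. You have simply written out the bookkeeping that the paper leaves implicit.
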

\begin{proof}[\textbf{Proof of Theorem~\ref{thm}}]
Following the Lemmas \ref{lemma:As}, \ref{lemma:Bs} and \ref{lemma:C} we see, given the assumptions of Theorem~\ref{thm}, that the dominant term in the prediction error is $\vert C \vert$. The result follows immediately. 
\end{proof}

\bibliography{FLRbib}

\end{document}